\newtheorem{thm}{Theorem}
\newtheorem{lem}{Lemma}
\newtheorem{deff}{Definition}
\newtheorem{remark}{Remark}
\def \rank {\mathrm {rank}}
\def \as {\mathrm {~ ~ a.s.}}
\def \deg {\mathrm{deg}}
\def \Exp {\mathrm{Exp}}
\def \Poisson {\mathrm{Poisson}}
\begin{document}
\title{Optimality of Fast Matching Algorithms for Random Networks with Applications to Structural Controllability}

\author{Mohamad Kazem~Shirani Faradonbeh,
        Ambuj~Tewari,
        and~George~Michailidis
}

\maketitle

\begin{abstract}
	Network control refers to a very large and diverse set of problems including controllability of linear time-invariant dynamical systems, where the objective is to select an appropriate input to steer the network to a desired state. There are many notions of controllability, one of them being {\em structural controllability}, which is intimately connected to finding maximum matchings on the underlying network topology. In this work, we study fast, scalable algorithms for finding maximum matchings for a large class of {\em random networks}. {First, we illustrate that {\em degree distribution} random networks are realistic models for real networks in terms of structural controllability.} Subsequently, we analyze a popular, fast and practical heuristic due to Karp and Sipser as well as a simplification of it. For both heuristics, we establish asymptotic optimality and provide results concerning the asymptotic size of maximum matchings for an extensive class of random networks. 
\end{abstract}
\begin{IEEEkeywords}
Maximum Matching, Karp-Sipser, Structural Controllability, Network Control, Random Networks.
\end{IEEEkeywords}

\section{Introduction}

\IEEEPARstart{N}{etworks} are capable of capturing relationships between a set of entities (vertices) and have found applications in
diverse scientific fields including biology, engineering, economics and the social sciences \cite{Lin,Indika1,Sirkant}. 
Network control refers to a very large and diverse set of problems that involve control actions over a network (see for example, 
\cite{Dinits,Ford,Goldberg,Crabill,Tadj,Papadimitriou,Nino, Luenberger, Liu,Indika2} and references therein).

A class of control problems involves dynamical systems evolving over time that have inputs and outputs and many results exist for systems that
exhibit linear and time-invariant dynamics \cite{Luenberger}. One particular notion of control is that of {\em structural controllability} \cite{Lin,dion2003generic}, which
was recently explored by Liu et al. \cite{Liu}. Under this notion, the structural controllability problem reduces to find maximum matchings on appropriate matrices as reviewed in
Section \ref{SCMM}. The problem of obtaining maximum matchings has been extensively studied in the computer science literature both for deterministic
\cite{gibbons1985algorithmic} as well as random networks
\cite{bollobas2001random}. {However, the focus in the literature has been on special
classes of {\em undirected random networks} \cite{Frieze1,Frieze2}. \textcolor{black}{For example, using the results of Tao and Vu \cite{tao2014random}, one can infer that for \emph{dense} Erdos-Renyi random networks, a single controller is sufficient to ensure controllability.} Little is known about the performance of matching algorithms in other interesting classes of random networks \cite{Bohman}, with an exception of the recent work of Balister and Gerke \cite{balister2015controllability}. We focus on {\em degree distribution} random network models, and establish results on the minimum number of controllers needed to guarantee structural controllability. We also show that these models are realistic representations for real world applications.}

A popular, fast and practical algorithm for matchings on undirected random networks is due to Karp and Sipser \cite{KS}, which represents the cornerstone of our theoretical investigations and through it we provide generalizations of previous work in the literature to broader classes of undirected random networks. Further, we also extend the results for directed variants of the same classes of random networks.

\subsection{Structural Controllability and Maximum Matchings}\label{SCMM}
Next, we review some key concepts in structural controllability for linear dynamical systems. 
Consider a system described by a $n$-dimensional {\em state} vector $x(t)=(x_1(t), ..., x_n(t))^T \in \mathbb R ^n$, whose
dynamical evolution is described by 
$$ \frac{dx}{dt} = Ax(t)+ B u(t),$$
where $A \in \mathbb R ^ {n \times n}$ is the system transition matrix, $u(t)=(u_1(t), ..., u_k(t))^T \in \mathbb R ^k$  
captures control actions and $B \in \mathbb R ^ {n \times k}$ is the input matrix. Assuming that the $n$-dimensional system can be represented by
vertices on a network $G=(V,E)$ with $V= \{1, 2 , ..., n \}$ denoting the set of vertices and $E \subset V \times V$ the set of edges, it can be seen that the non-zero entries
in the transition matrix $A$ correspond to the directed edges in $E$. {Indeed, for $i,j \in V$ the edge $i \to j$ exists if and only if vertex $i$ influences vertex $j$, i.e. $A_{ji} \neq 0$.} Such a system is called {\em controllable} if for any initial state $x(0)=x_0$ and any 
desired state $x_d$ for some $T<\infty $ one can find an input matrix $B$ and control vectors $\{u(t)\}_{0 \leq t \leq T}$ so that the system reaches state $x_d$ i.e. $x(T)=x_d$. The minimum $k$ for which system can be controllable is called the minimum number of controllers.

The magnitude of the entries in the transition matrix $A$ captures the interaction strength between the vertices in the network; for example, the traffic on individual communication links in a communications network or the strength of a regulatory interaction in a biological network. {The time invariant matrix $B$ 
indicates which vertices are controlled by an outside controller. Hence, the 
set of vertices that when applying controllers to them makes the system controllable needs to be identified.} 

{Note that we don't assume any constraint on the number of nonzero elements in the columns of input matrix $B$, i.e. one controller $u_i(t)$ can influence multiple vertices. The case where every controller can be applied to only one vertex in the network (which in turn implies that the goal is to find the minimum number of vertices the controllers can be applied to) is studied by Olshevsky \cite{olshevsky2014minimal}. He shows that finding the exact solution to the problem is NP hard.}

The algebraic criterion to check controllability of a time invariant linear dynamical system is Kalman's controllability rank condition, that states  that controllability can be
achieved, if and only if the matrix $C=[B,AB, ..., A^{n-1}B]$ is full rank; i.e. $\rank(C)=n$. Note that $C \in \mathbb R ^ {n \times nk}$.
This algebraic criterion is computationally hard to check, especially for large systems. Further, in many applications, obtaining exact values of $A$ may
not be feasible and hence a tractable alternative is needed.

Thus, we say that a time invariant linear dynamical system is \emph{structurally controllable}, if it is possible to select the non-zero values of $A,B$, so that Kalman's rank condition is satisfied \cite{Lin}. A structurally controllable system is controllable for almost all $A,B$; i.e. the pathological cases for which a structurally controllable network is not controllable has zero Lebesgue measure. {The relation between the minimum number of controllers needed to structurally control a network and the size of its maximum matching has been presented in several forms (see \cite{commault2002characterization}). The version used in the current work is the one appearing in Liu et al. \cite{Liu} ``minimum inputs theorem" (stated below)}. \textcolor{black}{Moreover, similar equivalence between structural controllability and maximum matching when every controller can be applied to only one vertex, is studied by Assadi et al. \cite{assadi2015complexity}.}

\textcolor{black}{Furthermore, Commault and Dion \cite{commault2015single} study the problem of using only a single controller applied to as few vertices as possible. More general results regarding the control configuration selection can be found in the work of Pequito et al. \cite{pequito2013framework} which discusses relations to maximum matching problem as well. Other considerations must be taken into account for deciding which approach is most suitable for the practical application under consideration. Finally, we have no assumption regarding self-loops in the network. Cowan et al. \cite{cowan2012nodal} study networks, where every vertex influences itself.}

Algorithms to find a maximum matching are well studied in the 
literature and exhibit polynomial time complexity (with respect to the size of the network). A popular one developed by Micali and Vazirani \cite{MM} has running time 
$O \left ( |V|^{0.5} |E| \right )$.

Next, for completeness we provide a definition of maximum matching and also state the minimum inputs theorem.
\begin{deff} For a directed network $G=(V,E)$, a subset of edges $M$ is a matching, if no two edges in $M$ share a common starting or a common ending vertex. A maximum matching corresponds to a matching of maximum size.
\end{deff} 
{\begin{deff} \label{unmatched} Given a matching $M$ for directed network $G=(V,E)$, a vertex is matched, if it is an ending vertex of an edge in the matching $M$. Otherwise, it is unmatched.
\end{deff}}
{\bf Minimum Inputs Theorem}~\cite{Liu}: 
\emph{Let $M^*$ be a maximum matching of the network $G=(V,E)$. The minimum number of controllers needed for structural controllability of the network is $\max \{ 1 , n- |M^*| \}$. Moreover, for any matching $M$, the network is struturally controllable using $\max \{ 1 , n- |M| \}$ controllers.}

The upshot of this result is that in order to find the minimum number of required controllers for structural controllability we can equivalently find the size of a maximum matching. {Furthermore, one can explicitly find the structure of the input matrix $B$ according to the proof of minimum inputs theorem. \textcolor{black}{In fact, one can see that a matching is formed of a set of directed loops and a set of directed paths. The first vertex of each path is unmatched, while all other vertices are matched. According to Lin's work \cite{Lin}, one controller is used to actuate every unmatched vertex. Amongst matched vertices, some might need to be actuated by a controller, but no new controller is needed, as it suffices to apply any of the previously used controllers to one arbitrary vertex of each loop.} So the number of nonzero components in $B \in \mathbb R^{n \times k}$, or equivalently the total number of the connections between the inputs and vertices in the network, is at most $n$. More details are provided by Liu et al. (Supplementary Information of \cite{Liu}).}

{As the problem of finding the minimum number of controllers needed for structural controllability of a network  reduces to maximum matching type of problems, henceforth we use the (minimum) number of unmatched vertices and the (minimum) number of controllers interchangeably.} In this work, we provide results about the size of matchings obtained by different fast algorithms for classes of random networks.
{In fact, Karp and Sipser \cite{KS} proved that, for the classical undirected Erdos-Renyi random network, the $KS$ algorithm is optimal. We generalize their results to a larger class of random networks.}
The remainder of the paper is organized as follows. In Section \ref{RN} we introduce different classes of random networks subsequently studied in this work.
Furthermore, some probabilistic results needed for technical developments are summarized. The main algorithms studied are introduced in Section \ref{algorithmsRN} and {connections to real networks using some numerical examples are provided in Section \ref{numerical}. The key results of the paper are presented in Sections \ref{matching} (analysis and optimality of the algorithms).} 

{{\bf Notations:} For set $S$, let $|S|$ be the number of elements in $S$. ${n \choose n_1, \ldots, n_r} = \frac{n!}{n_1! \ldots n_r !}$. For vertices $u,v \in V$ in the network $G=(V,E)$, $\{ u,v \}$ ($(u,v)$ or $(v,u)$ in directed or bipartite networks) denotes an edge, so $N=M \cup \{ \{ u,v \} \}$ means adding the edge $\{ u,v \}$ to $M$ gives $N$. $G - \{ u,v \}$ means removing vertices $u,v$ (and so all edges connected to them) from network $G$. Further, $\deg_G (v) $ denotes the degree of $v$ in $G$ i.e. the number of edges in $G$ connected to $v$. When there is no subscript, the network $G$ is identifiable from the context. Finally, for $x \in \mathbb R ^n$, $n \in \{ 0,1,2 ,... \} \cup \{\infty\}$, $\| x \|_1$ is $\ell_1$ norm of $x$: $\| x \|_1= \sum \limits_{i=1}^n |x_i|$. }

\section{Random Networks}\label{RN}

In this section, we introduce different classes of random network models and then present some general results about convergence and concentration of real-valued functions on networks. In order to have a general framework which includes both directed and undirected networks, we note that every undirected network $G=(V,E)$ can be considered as a directed network in which, for all vertices $i,j$, both edges $i \to j$ and $j \to i$ exist, if and only if the edge $i \leftrightarrow j$ exists in the original undirected network. All statements presented are true for both directed and undirected networks unless explicitly mentioned. For a comprehensive discussion on constructions and properties of (undirected) random networks, see chapter 3 in Durrett \cite{DurretBook}.

The first model for random networks we consider is the {\bf Erdos-Renyi} $(ER)$ model. A directed network $G=(V,E)$, $V= \{ 1, 2, ..., n \}$ is (drawn from) $ER$ if every edge $i \rightarrow j$ for $ i,j=1, 2, ..., n$ is present in the network independently with probability $p^{(n)}$. Analogously, an undirected network $G=(V,E)$, $V= \{ 1, 2, ..., n \}$ is $ER$ if every edge $i \leftrightarrow j$ for $ i,j=1, 2, ..., n, i \leq  j$ is present in the network independently with probability $p^{(n)}$. Henceforth, for $\lambda \in [0,\infty]$, $ER(\lambda)$ is a Erdos-Renyi random network, for which $np^{(n)} \to \lambda$ as $n \to \infty$. In $ER(\lambda)$ random networks, the parameter $\lambda$ corresponds to the average degree. 

The next model we consider is the {\bf Uniform Fixed-Size} $(UFS)$ model. A directed network $G=(V,E)$, $V= \{ 1, 2, ..., n \}$ is $UFS$ when the cardinality of the edge set $|E| =k_n$ for some fixed $k_n$, and the $k_n$ directed edges are drawn uniformly among all $n^2$ possible edges. The construction for the undirected network is
similar, but the $k_n$ edges are chosen uniformly among all $\frac{n(n+1)}{2}$ possible edges. For $\lambda \in [0,\infty]$, we denote by $UFS(\lambda)$ a random network of the UFS class, for which $\frac{k_n}{n} \to \lambda$ for directed and $\frac{k_n}{n} \to \frac{\lambda}{2}$ for the undirected case, as $n \to \infty$. Once again, the $\lambda$ parameter corresponds to the average degree.

Finally, we introduce the class of {\bf Degree Distribution} $(DD)$ random networks. There are a couple of reasons for considering this class. First, it lets us consider networks with degree distributions commonly found in real networks that simpler models such as $ER$ (where the degree distribution is Poisson) cannot model. {For example, as shown in Fig. \ref{DDHisto1}, the degree distribution of real world networks can be any arbitrary non-parametric distribution.

Histograms demonstrating input and output degree distributions of some other real world networks can be found in Fig. \ref{DDHisto9}.
\begin{figure}[t!] 
        \centering
        \scalebox{.5}
        {\includegraphics {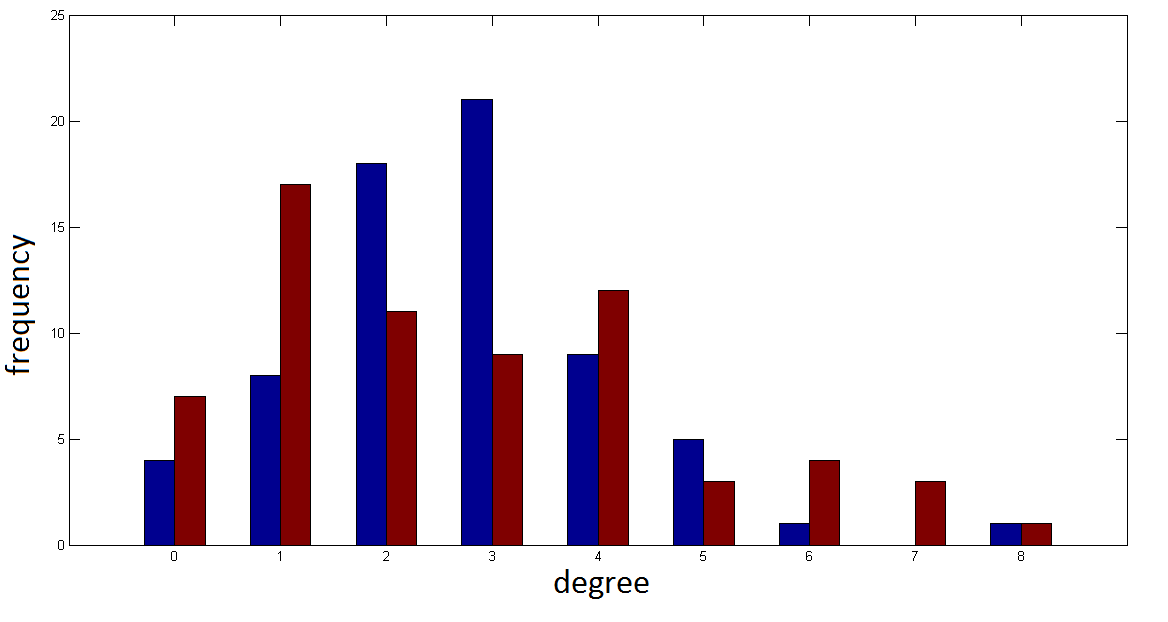}} 
	\caption{{Degree distribution histogram of the social network between prison inmates. For more information see Table \ref{nwtable}. (blue: input degree, red: output degree). As shown here, it is difficult to model degree distributions of real networks using standard parameteric distributions.} }
	\label{DDHisto1}
\end{figure}
\begin{figure}[t!] 
        \centering
        \scalebox{.65}
        {\includegraphics {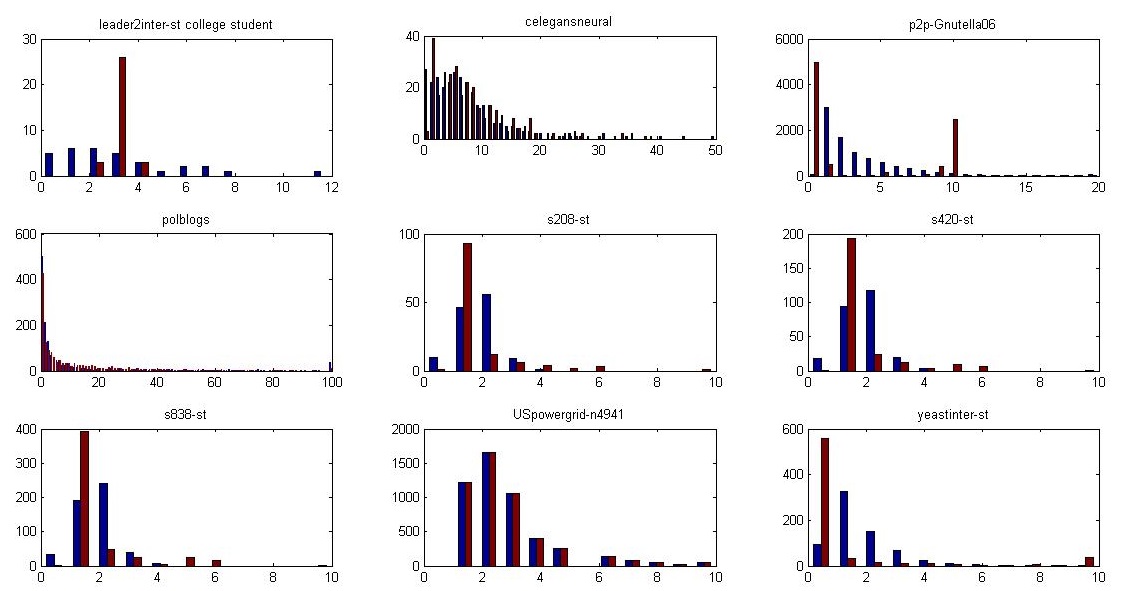}} 
	\caption{{Degree distribution histogram of some networks. For more information see Table \ref{nwtable}. (blue: input degree, red: output degree).} }
	\label{DDHisto9}
\end{figure}
 
{Second, as empirically shown in Section \ref{numerical}, the number of controllers for structural controllability of a network is, to a large extent, determined by its degree distribution. More details can be found in Section \ref{numerical} as well as the work of Liu et al. \cite{Liu}.} 

An undirected random network is a member of the $DD$ class, if for a given degree distribution the attachment of edges is random. Specifically, let $p$ be a probability distribution with support on the set $\{0,1,2, \ldots\}$ of nonnegative integers. We then construct an undirected network $DD(p)=(V,E)$ as follows. {Let $V= \{ 1, 2, ..., n \}$ and for $i \in V$, let vertex $i$ have $D_i$ undirected half-edge(s) (one-half of an edge is connected to vertex $i$). Note that $D_1, \ldots , D_n$ are the corresponding degrees which are assumed to be independent and identically distributed (iid) with distribution $p$; $\mathbb{P} (D_i = k) = p(k)$.} 
To complete the construction, we then pair all half-edges randomly; i.e. all $\sum \limits_{i=1}^n D_i \choose 2, \ldots , 2$ possible attachments of half-edges have equal probability.

When the number of half-edges $\sum \limits_{i=1}^n D_i$ is an even number, the construction is straightforward and the number of edges will be $\frac{1}{2} \sum \limits_{i=1}^n D_i$. When it is an odd number, we pair the half-edges randomly to obtain the network and omit the last single half-edge for which no pairing was established at the end of the construction, so that the number of edges will be $\frac{1}{2} (\sum \limits_{i=1}^n D_i -1)$. Note that the omission or presence of  multiple edges will lead to a difference between $D_1, \ldots , D_n$ and the actual observed degrees $\deg(1), \ldots , \deg(n)$ once the network construction is completed. However, as Lemma~\ref{RNDegDist} below establishes, the asymptotic empirical degree distribution will be the original degree distribution from which the network was constructed, as long as the expected value of $D_i$, $\mathbb E D_i$ is finite.

Viewing an undirected network $DD(p)$ as a directed one, both input and output degrees of vertex $i$ are $\deg(i)$. To construct a directed DD random network, denoted by $DD(p_{in},p_{out})$, with distinct input ($\deg_{in}$) and output ($\deg_{out}$) degrees, we do the following: once we have iid draws $D^{(in)}_i$ and $D^{(out)}_i$ from the input and output degree probability distributions $p_{in}$ and $p_{out}$ respectively:
$$\mathbb P (D^{(in)}_i = k) = p_{in}(k), \ \mathbb P (D^{(out)}_i = k) = p_{out}(k) ,$$
let vertex $i \in V$ have $D^{(in)}_i$ directed half-edges pointing into vertex $i$ and $D^{(out)}_i$ directed half-edges pointing out from vertex $i$. Next, we pair directed half-edges randomly to have $\min \{ \sum \limits_{i=0}^n D^{(in)}_i, \sum \limits_{i=0}^n D^{(out)}_i\}$ edges and omit the remaining half-edges. The random pairing of half-edges implies that all $\left ( \max \big\{ \sum \limits_{i=0}^n D^{(in)}_i, \sum \limits_{i=0}^n D^{(out)}_i\big\} \right ) !$ possible pairings of half-edges are equally likely. Note that $D_i^{(in)}, D_i^{(out)}$ do not need to be independent.

Furthermore, in general, the degrees do not need to be iid. In fact, as shown later in the paper, the key asymptotic results we establish are based on the empirical degree distributions which are, by the following lemma, same as the original degree distributions when vertex degrees are iid. However, as long as for all $k=0,1, \ldots$, $\lim \limits_{n \to \infty} \frac{| \{ i \in V : D_i=k \} |}{n}$ (equivalently $\lim \limits_{n \to \infty} \frac{| \{ i \in V : D^{(in)}_i=k \} |}{n}, \lim \limits_{n \to \infty} \frac{| \{ i \in V : D^{(out)}_i=k \} |}{n}$) are deterministic, our results hold using the resulting asymptotic empirical degree distributions.

\begin{lem} \label{RNDegDist}
For an undirected network $G=(V,E)$, define the asymptotic empirical degree distribution as 
$$\hat p (k) =\lim \limits_{n \to \infty} \frac{| \{ i \in V : \deg(i)=k \} |}{n}.$$
If $G=DD(p)$ is a random network and $\mu=\sum \limits_{k=0}^\infty kp(k) < \infty$, then the limit above exists and we have $\hat p (k)= p(k)  $ for all $k=1, 2, \ldots$. 
In general for a network $G=(V,E)$, define the asymptotic empirical input and output degree distributions as 
$$\hat p_{in}(k) =\lim \limits_{n \to \infty} \frac{| \{ i \in V : \deg_{in}(i)=k \} |}{n} $$
$$\hat p_{out}(k)= \lim \limits_{n \to \infty} \frac{| \{ i \in V : \deg_{out}(i)=k \} |}{n}.$$
If $G=DD(p_{in},p_{out})$ is a random network and $ \mu= \sum \limits_{k=0}^\infty kp_{in}(k) = \sum \limits_{k=0}^\infty kp_{out}(k) < \infty$,
then for all $k=1, 2, \ldots$ the limits above exist and we have $\hat p_{in} (k)= p_{in}(k), \hat p_{out}(k)=p_{out}(k) $.
\end{lem}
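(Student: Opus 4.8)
The plan is to condition throughout on the realized degree sequence $D=(D_1,\dots,D_n)$ and to treat the two assertions separately; since the directed case reduces to the undirected argument run twice, I describe the undirected one. Write $N_k=|\{i:\deg(i)=k\}|$, $M_k=|\{i:D_i=k\}|$, and $H=\sum_{i=1}^n D_i$. The strong law of large numbers applied to the i.i.d.\ indicators $\mathbb{1}(D_i=k)$ gives $M_k/n\to p(k)$ almost surely for every $k$, and applied to the $D_i$ gives $H/n\to\mu$. The only sources of discrepancy between $\deg(i)$ and $D_i$ in the configuration construction are self-loops, parallel (multiple) edges, and the single omitted half-edge when $H$ is odd; each such operation affects the degrees of the involved vertices by a bounded amount and, in the convention of the construction, cannot increase them, so $\deg(i)\le D_i$, and the omitted half-edge changes at most one vertex. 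Hence it suffices to show that for each fixed $k\ge 1$ the number of vertices that cross into or out of the class $\{\deg=k\}$ because of a self-loop or a parallel edge is $o(n)$.

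First I would record two consequences of $\mu<\infty$. Since $\sum_n \mathbb{P}(D_1>\varepsilon n)\le \varepsilon^{-1}\mathbb{E} D_1<\infty$, Borel--Cantelli gives $\max_{i\le n}D_i/n\to 0$ almost surely, whence $\frac{1}{n^2}\sum_i D_i^2 \le \frac{\max_i D_i}{n}\cdot\frac{H}{n}\to 0$ almost surely. Next, conditionally on $D$ the pairing is a uniform perfect matching of the $H$ half-edges, so for any two half-edges $h_1,h_2$ incident to a common vertex $i$ the probability that they form a self-loop or a parallel edge is at most $\varepsilon_n:=\frac{1}{H-1}+\frac{\sum_j D_j^2}{(H-1)(H-3)}$, and by the two facts just established $\varepsilon_n\to 0$ almost surely. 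A union bound over the $\binom{d}{2}$ pairs of half-edges at a vertex of degree $d$ then yields $\mathbb{P}(\deg(i)\ne D_i\mid D)\le \binom{d}{2}\varepsilon_n$.

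The delicate point, and the main obstacle, is that under only a first-moment hypothesis the \emph{global} expected number of parallel edges need not be $o(n)$: it is controlled by a quantity of order $(\sum_i D_i^2)^2/H^2$, which can exceed $n$ for heavy-tailed $p$. The remedy is truncation, which isolates the rare high-degree vertices on which collisions concentrate while leaving the fixed class $\{\deg=k\}$ essentially untouched. Fixing $L\ge k$, I would split $N_k=\sum_{k\le d\le L}|\{i:D_i=d,\deg(i)=k\}|+\sum_{d>L}|\{i:D_i=d,\deg(i)=k\}|$. The tail sum is at most $|\{i:D_i>L\}|$, whose normalized limit is $\rho_L:=\sum_{d>L}p(d)$. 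In the truncated part, each term with $k<d\le L$ has conditional expectation at most $n\binom{L}{2}\varepsilon_n=o(n)$, while the $d=k$ term equals $M_k$ minus a quantity of conditional expectation at most $M_k\binom{k}{2}\varepsilon_n=o(n)$; by Markov's inequality together with $M_k/n\to p(k)$, the truncated count satisfies $\frac{1}{n}\sum_{k\le d\le L}|\{i:D_i=d,\deg(i)=k\}|\to p(k)$ in probability for each fixed $L$. Combining the two pieces gives, in probability, $p(k)\le\liminf N_k/n\le\limsup N_k/n\le p(k)+\rho_L$, and letting $L\to\infty$ (so $\rho_L\to 0$) yields $N_k/n\to p(k)$.

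For the directed model $DD(p_{in},p_{out})$ the same argument applies verbatim after replacing $H$ by $H_{in}=\sum_i D_i^{in}$ and $H_{out}=\sum_i D_i^{out}$ (both $\sim\mu n$ almost surely, so the number of omitted half-edges $|H_{in}-H_{out}|=o(n)$ by the law of large numbers) and running the collision bound separately for in- and for out-half-edges; any dependence between $D_i^{in}$ and $D_i^{out}$ is immaterial because everything is carried out conditionally on the degree sequence, giving $\hat p_{in}=p_{in}$ and $\hat p_{out}=p_{out}$. The argument as written delivers convergence in probability; one upgrades it to the almost-sure limit stated in the lemma by observing that $N_k$ changes by a bounded amount when two pairs of the random matching are switched, so a bounded-differences (Azuma--McDiarmid) inequality gives deviations of order $o(n)$ and Borel--Cantelli closes the gap.
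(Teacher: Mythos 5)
Your proof is correct and follows essentially the same route as the paper's: truncate the degree sequence, bound the per-vertex probability of a self-loop or multiple edge conditionally on the degrees, apply Markov's inequality and the law of large numbers, and upgrade to almost-sure convergence via a bounded-differences concentration argument (the paper invokes its Theorem 2 through the Lipschitz property of $G \mapsto |\{i : \deg_G(i)=k\}|$, which is the same mechanism as your McDiarmid step on pair-switches). The only substantive difference is one of bookkeeping---you truncate by splitting the vertex count according to the latent degree $D_i$ rather than by deleting edges from high-degree vertices to form $G'$ as the paper does---and your explicit observation that the global multi-edge count can be superlinear under a first-moment-only hypothesis correctly pinpoints why the truncation is indispensable, a point the paper leaves implicit.
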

Henceforth, for all networks by $p_{in}, p_{out}$ we mean asymptotic empirical degree distributions $\hat p_{in}, \hat p_{out}$ respectively.


Next, we define the Lipschitz property for real valued functions defined over networks.
\begin{deff}\label{lipschitz} 
Let $f$ be a real-valued function on the set of directed networks. We say $f$ has the {\bf Lipschitz} property, if $| f(G_1)-f(G_2) | \leq 1$ whenever $G_1=(V,E_1), G_2=(V,E_2)$, $E_2=E_1 \cup \{ e \}$; i.e. the value of $f$ will change at most by 1 if one new edge is added to the network.
\end{deff}

{\begin{remark}
Properly defining the norm $\| \cdot \|$ on the networks, one can see the Lipschitz property is the same as the classic notion $\left | f(G_1) - f(G_2) \right | \leq \| G_1 - G_2 \|$. 
Indeed, for every network $G=(V,E)$, letting $\mathcal A (G)$ be the adjacency matrix of $G$ ($\mathcal A (G) \in \mathbb R ^{|V| \times |V|}$, for $i,j \in V$ if $i \to j$, $\mathcal A (G)_{ji}=1$ and $\mathcal A (G)_{ji}=0$ otherwise), define $\| G \| = \| \mathcal A (G) \|_1 = \sum \limits_{i,j=1}^{|V|} | \mathcal A (G)_{ij}|$. Then $G_1, G_2$ differ by only one edge if and only if $\| \mathcal A (G_1) - \mathcal A (G_2) \|_1 =1$. So $f$ has the Lipschitz property if $| f(G_1) - f(G_2)|  = | f \left ( \mathcal A (G_1) \right ) - f \left ( \mathcal A (G_2) \right ) | \leq \| \mathcal A (G_1) - \mathcal A (G_2) \|_1$.
\end{remark}}
{Next, we present convergence and concentration inequalities for functions of random networks which have the Lipschitz property. Specifically,  the number of unmatched vertices (or equivalently the number of controllers) obtained by a matching algorithm, has the Lipschitz property as shown in Section \ref{matching}. The consequence of the Lipschitz property for a function defined on a ``not too dense" random network, is that it concentrates around its expected value. Further, if the average degree of the network is finite, then the concentration occurs exponentially fast.} These results are summarized in the following theorem.

\begin{thm}[Convergence of real-valued functions for random networks]\label{conv}
Let $G=(V,E)$ belong to the $ER$, $UFS$, $DD(p_{in},p_{out})$ or $DD(p)$ class. For a real-valued function $f$ which has the Lipschitz property, if
\begin{eqnarray*}
\label{DD1}
\limsup \limits_{n \rightarrow \infty} \frac{\mathbb E \left ( | E | \right ) }{n^2}  = 0,
\end{eqnarray*}
then $\frac{ f(G) - \mathbb{E}(f(G))}{n} \rightarrow _P 0$ as $n \to \infty$. If in addition,
\begin{eqnarray*}
\label{DD2}
\limsup \limits_{n \rightarrow \infty} \frac{\mathbb E \left ( | E | \right ) \log n}{n^2}  = 0,
\end{eqnarray*}
then $\frac{f(G) - \mathbb{E}(f(G))}{n} \rightarrow 0 \as$ as $n \to \infty$. When $\sup \limits_{n \geq 1} \frac{\mathbb E \left ( | E | \right ) }{n}  < \infty$ the rate of convergence is exponential; i.e. there is $C >0$ such that for every $0 < \epsilon < 1$ :
$$ \mathbb{P}(  \frac{| f(G) - \mathbb{E}(f(G)) |}{n} > \epsilon ) \leq 2 \exp (- n C \epsilon ^2 ).$$
\end{thm}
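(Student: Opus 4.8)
The plan is to deduce all three conclusions from a single martingale concentration inequality, using the Lipschitz property of $f$ to control the increments of a suitable Doob (edge-exposure) martingale, and then reading off the three regimes by substituting $t=n\epsilon$ and tracking how the exponent depends on $\mathbb{E}(|E|)$. Concretely, for each model I would order the ``atoms'' of randomness that generate $G$, let $\mathcal{F}_0\subset\mathcal{F}_1\subset\cdots$ be the filtration they induce, and set $Z_i=\mathbb{E}\big(f(G)\mid\mathcal{F}_i\big)$, so that $Z_0=\mathbb{E}(f(G))$, $Z_{\text{end}}=f(G)$, and $\{Z_i\}$ is a martingale. The Lipschitz property (Definition~\ref{lipschitz}) is exactly what bounds the increments: resampling a single atom alters $G$ by $O(1)$ edges, hence $|Z_i-Z_{i-1}|\le M$ for an absolute constant $M$.

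First I would set up the martingale for each class. For $ER$ the natural atoms are the $n^2$ independent edge indicators $\{X_e\}$; flipping one $X_e$ changes $G$ by one edge, so the discrete derivative $g_e=\mathbb{E}(f\mid\mathcal{F}_{e-1},X_e{=}1)-\mathbb{E}(f\mid\mathcal{F}_{e-1},X_e{=}0)$ satisfies $|g_e|\le1$ and $Z_e-Z_{e-1}=(X_e-p^{(n)})g_e$. For $UFS$ the $k_n$ edges are drawn uniformly without replacement, so I would expose them one at a time ($k_n$ steps); resampling one edge is a remove-then-add operation, changing $f$ by at most $2$. For $DD$ I would condition on the degree sequence (consistent with the paper's convention of working with the asymptotic empirical degree distribution) and expose the uniform half-edge pairing one matched pair at a time; there are exactly $m=\tfrac12\sum_i D_i$ such steps, each revealing one edge and hence moving $f$ by at most a constant.

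The crux is the variance bookkeeping, and this is where the regimes separate. For $ER$, a plain Azuma--Hoeffding bound over the $n^2$ indicators is useless---it yields $2\exp(-c\epsilon^2)$, a constant---so instead I would use a Bernstein-type (Freedman) martingale inequality driven by the \emph{conditional variance}. Here $\sum_e\mathbb{E}\big[(Z_e-Z_{e-1})^2\mid\mathcal{F}_{e-1}\big]=\sum_e p^{(n)}(1-p^{(n)})g_e^2\le p^{(n)}n^2=\mathbb{E}(|E|)$, the small edge probability doing the essential work. For $UFS$ and $DD$ the number of steps is itself $k_n=\mathbb{E}(|E|)$, respectively $m$, with $O(1)$ increments, so the conditional-variance sum is again $O(\mathbb{E}(|E|))$. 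Thus in every case the inequality gives, with $V\le C\,\mathbb{E}(|E|)$ and $t=n\epsilon$, a bound of the form $\mathbb{P}(|f-\mathbb{E}f|>n\epsilon)\le 2\exp\!\big(-\tfrac{n^2\epsilon^2}{2(C\mathbb{E}(|E|)+Mn\epsilon/3)}\big)$. Reading off the exponent: when $\mathbb{E}(|E|)=o(n^2)$ it tends to $+\infty$, giving convergence in probability; when $\mathbb{E}(|E|)\log n=o(n^2)$ the exponent exceeds $c\log n$, so the probabilities are summable and Borel--Cantelli yields almost-sure convergence; and when $\mathbb{E}(|E|)\le\lambda n$ the denominator is $O(n)$ (using $\epsilon<1$), so the exponent is at least $nC'\epsilon^2$, the claimed exponential rate.

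I expect two main obstacles. For $ER$, the naive edge-exposure martingale over $n^2$ atoms is too long for Azuma alone; the fix---replacing the worst-case increment bound by the conditional-variance estimate $\sum_e\mathbb{E}[(Z_e-Z_{e-1})^2\mid\mathcal{F}_{e-1}]\le\mathbb{E}(|E|)$---is the technical heart and must be justified from the factorization $Z_e-Z_{e-1}=(X_e-p^{(n)})g_e$ together with $|g_e|\le1$. For $DD$, the genuine subtlety is the extra randomness in the degree sequence: conditioning on the degrees reduces the problem to the pairing martingale above, but one must then argue that the conditional mean $\mathbb{E}(f(G)\mid D_1,\dots,D_n)$ concentrates around $\mathbb{E}(f(G))$. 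Because a Lipschitz $f$ can shift by up to $D_i$ when vertex $i$'s degree is resampled, the degree-exposure increments are controlled only by the fluctuations of the $D_i$; this is harmless for the in-probability and almost-sure statements (the empirical degree distribution converges by Lemma~\ref{RNDegDist}), and with the paper's convention that the asymptotic empirical degree distribution is deterministic the pairing martingale alone already delivers the exponential rate, since $m=\tfrac12\sum_iD_i=O(n)$.
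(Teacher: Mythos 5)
Your proposal follows essentially the same route as the paper: a Doob exposure martingale (edge indicators for $ER$, sequential edge/half-edge pairing for $UFS$ and $DD$) whose increments are bounded via the Lipschitz property, combined with a conditional-variance (Freedman/Bennett-type) inequality in which the variance sum is $O(\mathbb{E}(|E|))$, and the three regimes then follow by setting $t=n\epsilon$ and applying Borel--Cantelli for the almost-sure case. The paper instantiates this with Rhee's martingale inequality (with exponent $\rho(x)=(1+x)\log(1+x)-x$) and cites Rhee directly for the $UFS$ case, but the decomposition, the key use of $\sum_e p(1-p)g_e^2\le\mathbb{E}(|E|)$ for $ER$, and the handling of $DD$ via the pairing martingale are all the same.
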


\section{ Matching Algorithms} \label{algorithmsRN}

Before studying the algorithms, we provide a description of networks which will be useful later. As mentioned before, every undirected network can be considered as a directed one. Now to have a better understanding of how the algorithms work we view every directed network as a bipartite network $G=(L,R,E)$ where $L=R=V$, $E \subset L \times R$, {$L$ and $R$ are respectively the left and the right side of the bipartite network}, and for $l \in L, r \in R$ there is an edge $(l,r) \in E$ if and only if in the original directed network there is an edge from $l$ to $r$: $l \to r$. Henceforth we will only deal with bipartite networks.

Matching algorithms take a network as input and produce a matching as output. Maximum matching algorithms will give a matching of maximum size. Algorithm~\ref{greedyalgo} is the well known Greedy Algorithm that produces a suboptimal matching $M_G$ in general. {For an arbitrarily chosen vertex $v$ on the right side, Greedy tries to find an arbitrary vertex $u$ on the left side which is connected to $v$ and has not been used before by any other vertex on the right side. Note that as mentioned above, networks can be assumed to be bipartite.} 
{Moreover, clearly the time complexity of Greedy is $O(n)$.}
\begin{algorithm}
  \caption{{\bf: Greedy} \\ Input: $G=(L,R,E)$ \\ Output: matching $M_G(G)$}\label{greedyalgo}
  \begin{algorithmic}
      \State $M_G \gets \emptyset$
      \While{$E \neq \emptyset$}
         \State let $v \in R$
         \If{$\deg(v)=0$} 
            \State $G \gets G - \{ v \}$
          \ElsIf{for $u \in L$, $(u,v) \in E$}
             \State $G \gets G - \{ u, v \}$
             \State $M_G \gets M_G \cup \{ (u,v) \}$ 
           \EndIf
      \EndWhile
      \State \textbf{return} $M_G$
  \end{algorithmic}
\end{algorithm}

Note that Greedy picks an arbitrary vertex $v \in R$ in every iteration. Because the goal is to find a matching of largest possible size, this strategy for picking a vertex can be improved. First note that for every vertex $v \in R$ of degree one, there is a matching of maximum size in which $v$ is matched. The logic is as follows. Let $u \in L$ be the vertex on the left side connected to $v$, $(u,v) \in E$. There must exist a vertex on the right side, $w \in R$, such that $(u,w) \in E$, such that $w$ is matched to $u$ by a matching $M$ of maximum size; since, if not, adding $(u,w)$ to it leads to a matching of larger size. Now defining a new matching $M'$ which is exactly $M$ with $(u,w)$ removed and $(u,v)$ added, i.e. $M'= M  \cup \{ (u,v) \} - \{ (u,w) \}$, we have $|M|=|M'|$ i.e. $M'$ is a maximum matching as well. Hence as long as we can find a vertex of degree one, we can find a matching of exactly maximum size. In other words: \emph{no mistake occurs as long as a degree one vertex is picked in every iteration} of Greedy {(a mistake occurs if in an iteration, the algorithm adds an edge to the matching which is not optimal, i.e. leads to a deviation from maximum matching)}. 

\begin{algorithm}
  \caption{{\bf: Karp-Sipser} \\ Input: $G=(L,R,E)$ \\ Output: matching $M_{KS}(G)$}\label{ksalgo}
  \begin{algorithmic}
      \State $M_{KS} \gets \emptyset$
      \While{$E \neq \emptyset$}
         \State let $v=\arg\!\min \limits_{w \in L \cup R} \deg(w)$
         \If{$\deg(v)=0$} 
            \State $G \gets G - \{ v \}$
          \ElsIf{for $u \in L \cup R$, $\{u,v \} \in E$}
             \State $G \gets G - \{ u, v \}$
             \State $M_{KS} \gets M_{KS} \cup \{ \{u,v\} \}$ 
           \EndIf
      \EndWhile
      \State \textbf{return} $M_{KS}$
  \end{algorithmic}
\end{algorithm}

This fact is the idea behind Algorithm~\ref{ksalgo}, called the Karp-Sipser Algorithm ($KS$) \cite{KS}, which produces a matching $M_{KS}$. In every iteration of $KS$, among all vertices a vertex of \emph{minimum} degree is picked. 

{Regarding time complexity of the $KS$ algorithm, note the following connection with the Greedy one: in every iteration, $KS$ picks a vertex of minimum degree and by using a ``Heap" data structure, finding a vertex of minimum degree has complexity $O(1)$ \cite{cormen1990c}; therefore, the running time of $KS$ is linear. Further, since for real networks, the average degree is finite, for each iteration of the $KS$ algorithm (i.e. excluding up to $o(n)$ iterations) there are $O(n)$ many vertices of degree one, which implies that determining a degree one vertex takes on average in $O(1)$ steps.}

We can simplify the KS algorithm and search for a minimum degree vertex among vertices on only one side to derive Algorithm~\ref{oksalgo} that we call the One-sided Karp-Sipser ($OKS$) Algorithm and whose output we denote by $M_{OKS}$. {As we will see later in the analysis of real networks, the size of the matching given by $OKS$ is usually less than or equal to the size of the matching given by $KS$. The intuition behind this is as follows. It is possible to make a mistake in $OKS$ because of lack of degree one vertices on the right side, but if degree one vertices exist on the left side, $KS$ can still work optimally. Yet, later we will prove (asymptotic) optimality of both algorithms.}

\begin{algorithm}
  \caption{{\bf: One-Sided Karp-Sipser} \\ Input: $G=(L,R,E)$ \\ Output: matching $M_{OKS}(G)$}\label{oksalgo}
  \begin{algorithmic}
      \State $M_{OKS} \gets \emptyset$
      \While{$E \neq \emptyset$}
         \State let $v=\arg\!\min \limits_{w \in R} \deg(w)$
         \If{$\deg(v)=0$} 
            \State $G \gets G - \{ v \}$
          \ElsIf{for $u \in L$, $(u,v) \in E$}
             \State $G \gets G - \{ u, v \}$
             \State $M_{OKS} \gets M_{OKS} \cup \{ (u,v) \}$ 
           \EndIf
      \EndWhile
      \State \textbf{return} $M_{OKS}$
  \end{algorithmic}
\end{algorithm}

\section{{Numerical Examples}} \label{numerical}
{In this section, we present the results of selected numerical analyses on real world networks. We study the number of controllers, or equivalently the number of unmatched vertices for 10 different networks. This provides support to studying $DD$ random networks as a realistic model for control applications. In other words, using Fig. \ref{random_vs_real}, to find the number of controllers needed to structurally control a network, one can assume real networks are in fact $DD$ random networks.} 

{{The full description of these networks can be found in Table \ref{nwtable}.}
\begin{center}
\begin{table}
\centering
\begin{tabular}{ |m{1cm}| m{4cm}|m{4cm}|m{2cm}|m{2cm}|m{2cm}| } 
\hline
number & network & type & $|V|$  & $|E|$ & description \\
\hline
1 & leader2inter-st-college-student \cite{van2003evolution} & Social & 32 & 96  & directed \\
\hline
2 & celegansneural \cite{watts1998collective} & Neuronal & 297  & 2345 & directed \\
\hline
3 & p2p-Gnutella06 \cite{leskovec2007graph} & Internet & 8717 & 31525 & directed \\
\hline
4& polblogs \cite{adamic2005political}& WWW& 1490 & 19025  & directed \\
\hline
5 & prisoninter-st-prison-inmates \cite{van2003evolution} & Social & 67 & 182 & directed \\
\hline
6 & s208-st \cite{milo2002network} & Electronic Circuits & 122 & 189 & directed \\
\hline
7 & s420-st \cite{milo2002network} & Electronic Circuits & 252 & 399 & directed \\ 
\hline
8 & s838-st \cite{milo2002network} & Electronic Circuits & 512 & 819 & directed \\
\hline
9 & USpowergrid-n4941 \cite{watts1998collective} & Power Grid & 4941 & 13188 & undirected \\
\hline
10 & yeastinter-st \cite{milo2002network} & Transcriptional Regulatory & 688 & 1079  & directed \\
\hline
\end{tabular}
\caption{ {Real networks used in this paper} }
\label{nwtable}
\end{table}
\end{center}

\begin{table*}
\centering
\begin{tabular}{ | m{4cm} | m{.3cm}|m{.6cm}|m{1cm}|m{1cm}|m{.5cm}|m{.6cm}|m{.6cm}|m{1cm}|m{1cm}|m{1cm}| } 
\hline
~ & \multicolumn{10}{|c|}{Networks} \\
\hline Algorithms& 1  &        2  &     3   & 4  & 5  &  6  & 7   &  8  &   9  &   10 \\
\hline
$|V|-|M_{OKS}|$ & 6  &        51  &      5033   &      703  &        10  &        29  &        59    &     119   &      610   &      565 \\
\hline
$|V|-|M_{KS}|$ & 6  &  49  &      5033   &      702  &   9  &    29  &        59    &     119   &     577  &      565 \\
\hline
$|V|-|M^*|$ & 6  &  49  &      5033   &      702  &   9  &    29  &        59    &     119   &     575  &      565 \\
\hline 
average of $ |V|- |M_{OKS}| $ for random network &5.6&29.8&5006.8&616.5&10.1&24.7&48.6&100.5&454.9&557.7\\
\hline
$|V| =n $ & 32  & 297 & 8717 & 1490 & 67 & 122 & 252 & 512   & 4941  & 688 \\
\hline
\end{tabular}
\caption{{The number of controllers given by different algorithms for different networks as well as the average number of controllers for equivalent degree distribution random network and the size of the networks.} }
\label{resulttable}
\end{table*}

{Table \ref{resulttable} contains the results for 10 networks, including social, internet, web, electronic, neuronal, power grid and transcriptional regulatory networks, enumerated by the first row of the table. The second (third and fourth respectively) row is the number of controllers needed for structural controllability of the corresponding network if $OKS$ ($KS$ and Maximum Matching respectively) algorithm will be used. }

{The fifth row shows the average number of controllers for a random network generated by degree distribution model for random networks, i.e. input and output degrees of all vertices are the same as the original real-world network but the attachment of half-edges is random. 
This random attachment is based on a simple fact that every random permutation is superposition of large enough number of random swaps. 
Finally, the last row of the table shows the size of the networks. As seen in the table, $OKS$ and $KS$ perform very close to Maximum Matching for all networks. Moreover, network 9 is the only one for which the performance of $OKS$ is significantly different than $KS$. We suspect that this is because network 9 is the only undirected network.}

{The numerical results in Table \ref{resulttable} can be understood better by the following figures. Fig. \ref{KS_vs_MM} shows the performance of $OKS$ and $KS$ versus Maximum Matching, i.e. rows 2,3 of the Table \ref{resulttable} versus row 4. The similarity in performance between $OKS, KS$ and maximum matching algorithms is better depicted in Fig. \ref{KS_vs_MM}.}

{The number of controllers for equivalent degree distribution random network versus the original network, i.e. row 5 of the Table \ref{resulttable} versus row 4 is shown in Fig. \ref{random_vs_real}. According to this plot, degree distribution random networks are sufficiently realistic models for real-world networks in terms of the number of controllers needed for structural controllability. }
\begin{figure}[t!]
        \centering
        \scalebox{.65}
        {\includegraphics {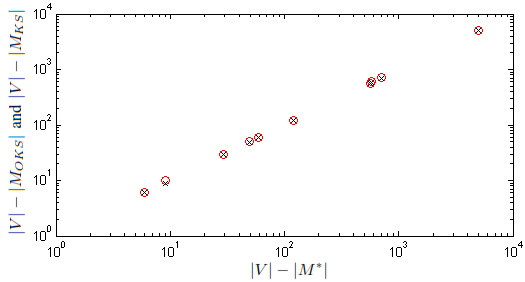}}
	\caption{{The number of controllers given by $OKS$ ( $\circ$ ) and $KS$ ( $\times$ ) versus the number of controllers given by Maximum Matching for 10 different real networks.} }
	\label{KS_vs_MM}
\end{figure}
\begin{figure}[t!]
        \centering
        \scalebox{.65}
        {\includegraphics {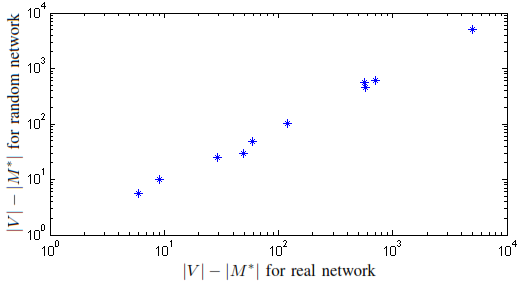}}
	\caption{{The number of controllers for randomly rewired networks versus the number of controllers for the original networks.} }
	\label{random_vs_real}
\end{figure}

\section{Matching Algorithms in Random Networks} \label{matching}

In this section, we present results about the asymptotic size of matchings produced by the algorithms presented above. {We first consider the general case where the asymptotic degree distribution of the random network is any arbitrary degree distribution with finite mean. Then, more detailed results regarding the special case for a Poisson asymptotic empirical degree distribution will be provided. Even though as seen before, the latter case is not common among real networks, because of classical interests on $ER$ and $UFS$ random networks and the $KS$ algorithm, as well as comprehensiveness, we also study it.} 

Before proceeding with the analysis of matching algorithms in an extensive class of random networks, we must ensure that the size of the matchings provided by either Greedy, KS, OKS algorithms or any \emph{maximum} matching algorithm has the Lipschitz property in order to have convergence of $\frac{|M_G(G)|}{n}, \frac{|M_{KS}(G)|}{n}, \frac{|M_{OKS}(G)|}{n}$ and $\frac{|M^*(G)|}{n}$ for random network $G$ where $M^*(G)$ is a maximum matching of network $G$. The following lemma establishes the desired Lipschitz property. The size of the matching provided by any of the above algorithms has the Lipschitz property due to the recursive nature of the algorithms. The Lipschitz property for the size of maximum matchings comes from their maximality regardless of the algorithm used to obtain the maximum matching (this follows easily from the definition of maximum matching).
\begin{lem}\label{LipProofHeuristics}
The real-valued functions $|M_G|, |M_{KS}|$, $|M_{OKS}|$ and $|M^*|$ which are the size of matchings provided by Greedy, KS, OKS and maximum matching algorithms, respectively, have the Lipschitz property.
\end{lem}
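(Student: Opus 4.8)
The plan is to separate the maximum-matching function $|M^*|$ from the three heuristics, since for $|M^*|$ the property is immediate while for the heuristics it genuinely rests on their recursive structure. For $|M^*|$ I would argue directly. Write $G_2 = (V, E_1 \cup \{e\})$. Every matching of $G_1$ is also a matching of $G_2$, so $|M^*(G_2)| \ge |M^*(G_1)|$. Conversely fix a maximum matching $M'$ of $G_2$: if $e \notin M'$ then $M' \subseteq E_1$ is a matching of $G_1$, giving $|M^*(G_1)| \ge |M'| = |M^*(G_2)|$; if $e \in M'$ then $M' \setminus \{e\}$ is a matching of $G_1$, giving $|M^*(G_1)| \ge |M^*(G_2)| - 1$. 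Hence $0 \le |M^*(G_2)| - |M^*(G_1)| \le 1$, which is exactly the Lipschitz property, and this uses only the definition of a matching, no algorithmic detail.

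For Greedy, $KS$ and $OKS$ I would argue by strong induction on $|V|$, coupling the two executions of the algorithm on $G$ and on $G' = G + e$. The essential point is that the only arbitrary decisions in each iteration are the selected vertex $v$ and, on a matched step, the chosen neighbor $u$; I would couple these so the two runs agree for as long as possible. As long as the selected vertex and the resulting action (delete an isolated vertex, or match a pair) coincide in $G$ and $G'$, both runs delete the same vertices, each contributes the same amount to its matching, and the two reduced graphs still differ by at most the single edge $e$ (or become identical once an endpoint of $e$ has been deleted). Applying the induction hypothesis to these strictly smaller reduced graphs and adding the common per-step contribution then preserves a difference of at most $1$. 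It is worth stressing that monotonicity is \emph{not} available for the heuristics — adding an edge can strictly \emph{decrease} the greedy output (e.g. forcing a degree-one right-vertex to match its new, ``wrong'' neighbor and thereby strand a later vertex) — so a coupling, rather than a sandwiching, argument is unavoidable.

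The crux, and the step I expect to be the main obstacle, is the single iteration at which the runs are forced to diverge, namely when the algorithm processes an endpoint $a$ or $b$ of $e$. For Greedy this divergence can be largely avoided by exploiting the freedom in choosing $u$: whenever the right-endpoint $b$ already has an available neighbor in $G$, I would match it to that same neighbor in $G'$, so that $e$ is simply discarded together with $b$ and no discrepancy is created. The truly forced case is when $e$ is the \emph{unique} usable edge at $b$, so $b$ is matched along $e$ in $G'$ but deleted (or left unmatched) in $G$; the two reduced graphs then differ by a single \emph{vertex} rather than a single edge, so the edge-addition hypothesis no longer applies. For $OKS$ and $KS$ there is the additional complication that selection is by \emph{minimum} degree, and $e$ raises the degrees of $a$ and $b$, so the minimum-degree vertex itself may differ between the two runs. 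To absorb both phenomena I would strengthen the induction hypothesis to a joint statement asserting, for coupled executions, that inputs differing by one edge \emph{or} by one vertex produce outputs differing by at most $1$, and prove both cases by the same recursion on $|V|$: an endpoint-divergence step then feeds a one-vertex instance back into the induction, while a vertex-deletion step similarly reduces to an edge-addition or smaller vertex-deletion instance. Verifying that every divergence subcase (isolated-versus-matched endpoint, and the $KS$/$OKS$ minimum-degree reshuffling) lands inside this strengthened dichotomy is where the real work lies; this is precisely the sense in which the ``recursive nature'' of the heuristics guarantees that a single added edge perturbs the execution in only one localized place.
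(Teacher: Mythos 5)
Your treatment of $|M^*|$ is complete and is exactly the paper's argument: a maximum matching of $G_2$ either avoids $e$ (and is a matching of $G_1$) or contains $e$ (and loses at most one edge upon deleting it), giving $|M^*(G_2)|-1\leq |M^*(G_1)|\leq |M^*(G_2)|$. For the three heuristics your high-level strategy --- induction on $|V|$ with a coupling of the two executions, agreeing for as long as possible and isolating the single iteration where an endpoint of $e$ forces a divergence --- is also the route the paper takes (the paper runs the induction for Greedy and observes that nothing in the argument used how the first vertex was selected, so the same proof covers $KS$ and $OKS$). You correctly identify the crux: in the forced-divergence case, $e=(u',v')$ is the unique edge at $v'$, so $v'$ is deleted unmatched in $G$ but matched along $e$ in $G'$, and the residual graphs differ by the \emph{vertex} $u'$ rather than by an edge.

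The gap is that the strengthened induction hypothesis you propose --- ``inputs differing by one edge \emph{or} by one vertex produce outputs differing by at most $1$'' --- does not close this case. At the divergence step the partial matchings already differ by $1$ (the edge $(u',v')$ is in $M(G')$ but $v'$ contributes nothing in $M(G)$), \emph{and} the residual graphs differ by the vertex $u'$; composing with your symmetric one-vertex hypothesis only yields a total bound of $2$. What is actually needed is a \emph{one-sided} (signed) vertex-deletion statement: under the coupling, deleting a single left-vertex never increases the output and decreases it by at most $1$, i.e.\ $A(H)-1\leq A(H-\{u'\})\leq A(H)$. With that, $A(G')=1+A(G-\{u',v'\})$ and $A(G)=A(G-\{v'\})$ combine to give $A(G)\leq A(G')\leq A(G)+1$: the $+1$ gained by matching along $e$ is exactly offset against the at-most-$1$ loss from $u'$ becoming unavailable. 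The paper achieves the same cancellation directly, by splitting on whether $u'$ would have been used later in the run on $G$ (if not, the runs otherwise coincide and $G'$ gains one edge; if $u'$ would have matched some $w$, that edge is traded for $(u',v')$ and the sizes are equal). So the missing ingredient is precisely this monotone vertex-deletion lemma (or the equivalent direct case analysis), without which the step you flag as ``where the real work lies'' does not go through as stated.
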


\subsection{Arbitrary Degree Distribution} \label{optimality}
We establish the optimality of $OKS$ algorithm which immediately yields optimality of $KS$ as well, for reasons explained before. For this purpose, we follow in the footsteps of Karp and Sipser \cite{KS} and embed the dynamics of both input and output degree sequences as the algorithm proceeds in continuous time. This embedding provides differential equations governing the degree sequence vectors. However, in the general degree distribution case, unlike the classic Erdos-Renyi case, the differential equations are defined in arbitrarily high dimensions. So there is little hope of working in fixed small dimension as Karp and Sipser \cite{KS} did (their differential equations were 3 dimensional) and new ideas are needed. The key idea in our proof is to use the differential equations to show that the number of iterations when there is no degree one vertex (and so the algorithm can possibly make a mistake) is sublinear (in $n$) which means the fraction of unmatched vertices (or equivalently the relative size of the number of controllers) given by $OKS$ is asymptotically the same as that of maximum matching. Finally, a set of equations for the relative size of maximum matching according to asymptotic empirical input and output degree distributions is provided.
\begin{thm}[Asymptotic optimality of OKS algorithm]\label{OKS}
 For network $G=(L,R,E), |R| = |L| = n$ let $ | M_{OKS}(G) |$ and $| M^*(G)|$ be the size of matching given by OKS algorithm and the size of maximum matching respectively. Let $G$ be either $ER$, $UFS$ or $DD$ random network with finite average degree, i.e. $\lim \limits_{n \to \infty} \frac{ |E| }{n}=\sum \limits_{i=0}^ \infty i p_{in}(i)= \sum \limits_{i=0}^ \infty i p_{out}(i) < \infty $ (where $ p_{in},  p_{out}$ are asymptotic empirical degree distributions). Then 
$$\lim \limits_{n \to \infty} \frac{| M_{OKS}(G) |}{n}= \lim \limits_{n \to \infty} \frac{| M^*(G) |}{n}.$$
\end{thm}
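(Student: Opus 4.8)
The plan is to reduce the claim to a statement about the number of iterations in which $OKS$ can err, and then to control that number through a differential-equation analysis of the evolving degree sequence. First I would split the iterations into \emph{good} ones, where the selected right vertex $v$ has $\deg(v) \in \{0,1\}$, and \emph{bad} ones, where $\deg(v) \ge 2$. By the degree-one optimality argument recorded in Section~\ref{algorithmsRN}, a good iteration loses nothing: deleting an isolated vertex leaves the maximum matching size unchanged, and a degree-one vertex can always be matched within some maximum matching. A bad iteration deletes two vertices while committing a single edge; since deleting one vertex lowers the maximum matching size by at most one, the running shortfall between $OKS$ and the optimum grows by at most one at each bad step and never at a good one. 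Tracking the potential \emph{(edges committed so far)} $+$ \emph{(maximum matching of the residual network)}, which starts at $|M^*(G)|$ and ends at $|M_{OKS}(G)|$, gives $0 \le |M^*(G)| - |M_{OKS}(G)| \le B(G)$, where $B(G)$ is the number of bad iterations. It therefore suffices to prove $B(G)/n \to 0$, since the limits of $|M_{OKS}|/n$ and $|M^*|/n$---whose existence follows from the fluid limit below and whose concentration around the mean is guaranteed by Theorem~\ref{conv} applied to the Lipschitz functions of Lemma~\ref{LipProofHeuristics}---must then coincide.

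Next I would exploit the configuration-model description of $DD$, together with Lemma~\ref{RNDegDist} identifying the limiting empirical degree distributions of $ER$ and $UFS$, to reveal half-edge pairings only as the algorithm requests them. Describing the residual network by the counts $L_j, R_j$ of left and right vertices of current degree $j$, each step of $OKS$ induces an explicit expected change in these counts: matching a degree-one right vertex and deleting its unique left neighbour, of out-degree $d$, decrements the degrees of that neighbour's $d-1$ remaining right-neighbours, each attached under the random pairing to a uniformly chosen half-edge. Scaling by $n$ and parametrizing by the number of completed steps, I would invoke the differential-equation (fluid-limit) method to show that the densities $r_j = R_j/n$ and $\ell_j = L_j/n$ concentrate around the solution of an ODE system driven by $p_{in}$ and $p_{out}$; the densities $r_0, \ell_0$ of isolated, hence permanently unmatched, vertices at termination then yield the promised closed-form equations for $\lim |M^*(G)|/n$.

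The heart of the argument is to bound $B(G)$ using this system. A bad step occurs exactly at a ``time'' when the density $r_1$ of degree-one right vertices vanishes while edges remain. I would show that matching a degree-one vertex replenishes such vertices fast enough that $r_1$ stays positive throughout the active phase except on a set of times of measure zero---equivalently, that the drift of $r_1$ is strictly positive whenever $r_1 = 0$ and edges survive---so that the number of steps spent with $r_1 = 0$ is $o(n)$, which is precisely $B(G) = o(n)$.

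The principal difficulty, and the point where new ideas beyond Karp and Sipser \cite{KS} are needed, is that for an arbitrary degree distribution the ODE system is infinite-dimensional, with one coordinate per admissible degree; the three-dimensional collapse available in the Poisson case is lost. The hard part is thus twofold: (i) establishing existence, uniqueness and validity of the fluid limit for the infinite system, which I would handle by truncating high degrees and using the finite-mean hypothesis $\sum_i i\,p_{in}(i) = \sum_i i\,p_{out}(i) < \infty$ to control the degree tail uniformly in $n$; and (ii) proving the positivity-of-drift statement for $r_1$ at the infinite-dimensional level, for which I expect to project the system onto a scalar summary---such as a generating-function encoding of the degree densities---so that the qualitative behaviour of $r_1$ can be deduced from a low-dimensional reduction.
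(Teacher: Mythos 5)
Your proposal follows essentially the same route as the paper: bound the shortfall of $OKS$ by the number of iterations lacking a degree-one right vertex, embed the degree-sequence dynamics in a continuous-time/fluid-limit ODE system, show that the set of times where that deficiency occurs has Lebesgue measure zero (hence the number of bad iterations is $o(n)$), and pass from bounded degrees to the finite-mean case by truncation, with Lemma~\ref{LipProofHeuristics} and Theorem~\ref{conv} supplying concentration. The one caveat is that your sub-claim of strictly positive drift of $r_1$ whenever $r_1=0$ is not literally true (the drift there is proportional to $2r_2$, which vanishes if $r_2=0$ as well); the paper instead argues that $r_1\equiv 0$ on an interval would force $r_2\equiv 0$, then $r_3\equiv 0$, and so on up to the truncation level $N$, a contradiction, which yields the same measure-zero conclusion you need.
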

{\begin{remark}
Under the assumptions of Theorem \ref{OKS}, one can show the asymptotic optimality of $KS$ as well:$\lim \limits_{n \to \infty} \frac{| M_{KS}(G) |}{n}= \lim \limits_{n \to \infty} \frac{| M^*(G) |}{n}$. We omit the proof here as it is very similar to the proof of Theorem \ref{OKS}.
\end{remark}}
Note that in Theorem \ref{conv} letting $\epsilon = n^{-r}$ for every $r>\frac{1}{2}$ the convergence holds. So the difference between $|M_{OKS}|$ and $|M^*|$ is $O(\sqrt n)$. Now the following questions arise: $(i)$ what is the size of maximum matching? $(ii)$ how can we compute the answer (asymptotically) without running the algorithm? The following theorem gives the size of maximum matching in terms of input and output degree distributions. For $u \in \left [ 0,1 \right )$ define moment generating functions:
$$\Phi_{in}(u)=\sum \limits_{i=0}^ \infty p_{in}(i)u^i, \Phi_{out}(u)=\sum \limits_{i=0}^ \infty p_{out}(i)u^i,$$
$$\phi_{in}(u)= \frac{1}{\mu} \Phi_{in}'(u)=\sum \limits_{i=1}^ \infty \frac{ip_{in}(i)}{\mu}u^{i-1},$$
$$\phi_{out}(u)= \frac{1}{\mu} \Phi_{out}'(u)=\sum \limits_{i=1}^ \infty \frac{ip_{out}(i)}{\mu}u^{i-1},$$
where $\mu = \sum \limits_{i=0}^ \infty i p_{in}(i)= \sum \limits_{i=0}^ \infty i p_{out}(i) $ is the average degree.
\begin{thm}[Size of Maximum Matching] \label{MMSize}
For (either $ER$, $UFS$ or $DD$) random network $G=(L,R,E), |R| = |L| = n$ if $\mu < \infty $ let $U^*$ be 
\begin{eqnarray}
U^* &=& \frac{1}{2} \Big [ \Phi_{in}(1-w_1) + \Phi_{in}(w_2)+ \Phi_{out}(1-w_3) -2 \notag \\ 
&+&\Phi_{out}(w_4)+\mu \left ( w_3 (1-w_2)+w_1(1-w_4) \right ) \Big ]  \label{optformula}
\end{eqnarray}
where $(w_1,w_2,w_3,w_4) \in \left [ 0, 1 \right ) ^4$ is the smallest solution of
$$ \phi_{out}(1-w_3)=1-w_2~~,~~  \phi_{in}(w_2)=w_3 ,$$
$$ \phi_{in}(1-w_1)=1-w_4 ~~,~~  \phi_{out}(w_4)=w_1 ,$$
then the asymptotic fraction of unmatched vertices is $U^*$: 
$$\lim \limits_{n \to \infty} 1-\frac{| M^*(G)|}{n}=U^*.$$ 
\end{thm}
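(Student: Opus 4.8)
The plan is to reduce the computation of $\lim |M^*(G)|/n$ to a fixed-point analysis on the local weak limit of the network, exploiting the optimality result already established.

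\textbf{Reductions.} First I would invoke Theorem~\ref{OKS} (and the remark extending it to $KS$) together with Theorem~\ref{conv}: since $|M^*|$ has the Lipschitz property and the average degree $\mu$ is finite, $|M^*(G)|/n$ concentrates exponentially around $\mathbb E|M^*(G)|/n$, so it suffices to identify the deterministic limit of $\mathbb E|M^*(G)|/n$. Because $ER(\lambda)$, $UFS(\lambda)$ and $DD(p_{in},p_{out})$ with the same asymptotic empirical degree distributions share the same local structure, it is enough to analyze the configuration-type bipartite model $(L,R,E)$, where left vertices carry the degree law $p_{out}$ and right vertices the law $p_{in}$. A maximum matching covers $|M^*|$ vertices on each side, so the asymptotic fraction of unmatched (uncovered right) vertices is exactly $U^*=\lim(n-|M^*|)/n$, the quantity to be computed.

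\textbf{Local limit and message recursion.} Next I would show that this bipartite configuration model converges locally weakly to a two-type Galton--Watson tree $T$: a left root has offspring count $\sim p_{out}$ (children are right vertices), a right root has offspring count $\sim p_{in}$, and a vertex reached along an edge has residual offspring governed by the size-biased laws with generating functions $\phi_{in},\phi_{out}$. On a tree, maximum matchings are produced by the Karp--Sipser leaf-removal recursion, which I would encode by a Boolean ``availability'' message on each directed edge (is the endpoint left unmatched within its pendant subtree?). Since offspring are i.i.d.\ and the tree has two vertex types with two message orientations, the message law is described by four scalars $w_1,w_2,w_3,w_4\in[0,1)$. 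The rule ``a vertex is available iff all of its children are unavailable,'' combined with $\Phi'/\mu=\phi$, translates into precisely the two decoupled pairs $\phi_{out}(1-w_3)=1-w_2,\ \phi_{in}(w_2)=w_3$ and $\phi_{in}(1-w_1)=1-w_4,\ \phi_{out}(w_4)=w_1$, the two orientations being genuinely different because the left and right degree laws differ.

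\textbf{Selecting the fixed point and assembling $U^*$.} I would then argue that the correct root is the smallest solution in $[0,1)^4$: truncating leaf removal to finite depth yields messages monotone in the depth that converge to the extremal fixed point, and this extremal fixed point is the one realized by the algorithm; with the asymptotic optimality from Theorem~\ref{OKS} this identifies $\lim|M^*|/n$. Finally I would assemble $U^*$ by the cavity (Bethe/Gallai--Edmonds) bookkeeping, in which the matching size is a sum of vertex contributions minus an edge over-counting correction. The probabilities that a left and a right root are unmatched are $\Phi_{out}(1-w_3)$ and $\Phi_{in}(1-w_1)$; combining these with the conjugate terms $\Phi_{in}(w_2),\Phi_{out}(w_4)$ and the edge correction $\mu\bigl(w_3(1-w_2)+w_1(1-w_4)\bigr)$, and normalizing over the $2n$ vertices of the two sides (which produces the prefactor $\tfrac12$ and, via $\Phi_{in}(1)=\Phi_{out}(1)=1$, the constant $-2$), yields exactly formula~(\ref{optformula}).

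\textbf{Main obstacle.} The hard part is the second and third steps: rigorously showing that the \emph{global} optimum $|M^*|$ is captured by the \emph{local} fixed-point recursion, and that the relevant solution is the smallest one. The subtlety is that when the Karp--Sipser core is nonempty, leaf removal alone does not determine the matching on the core; I would control this contribution through a monotone coupling (finite-depth truncations sandwich the message law and converge to the extremal fixed point) combined with the established optimality of $OKS$ to close the remaining $o(n)$ gap. An alternative fully consistent with the paper's method is to bypass local weak convergence and instead integrate the differential equations tracking the input/output degree sequences used in the proof of Theorem~\ref{OKS} to their terminal time, reading off $w_1,\dots,w_4$ as the terminal coordinates; there the main difficulty is analyzing the infinite-dimensional system at its fixed point and matching its coordinates to the generating-function equations above.
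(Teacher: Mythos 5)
Your primary route is genuinely different from the paper's. The paper does not pass through local weak convergence or a message-passing recursion at all: it reduces $|M^*|$ to $|M_{KS}|$ via the optimality theorem, reuses the Kurtz/ODE machinery from the proof of Theorem~\ref{OKS} to track the input/output degree sequences $x(t),y(t)$ along the run of $KS$, defines \emph{time-dependent} generating functions $\Phi_{in}(t,\cdot),\Phi_{out}(t,\cdot)$ and quantities $w_1(t),\dots,w_4(t)$ solving the fixed-point equations at each time $t$, and then shows by a direct (long but elementary) computation that the functional $U(t)$ built from formula~(\ref{optformula}) satisfies $\dot U(t)=-1$ while the unmatched fraction accumulates as $x_0(T)=1-T$; evaluating $U$ at the terminal time and at $t=0$ gives $U^*=U(0)=1-T$. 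This is exactly the ``alternative'' you mention in your last paragraph. What the paper's route buys is that the selection of the relevant solution and the treatment of the Karp--Sipser core come for free from the dynamics (the trajectory simply spends measure-zero time where $m>1$, a fact already established in the proof of Theorem~\ref{OKS}), at the cost of an unilluminating algebraic verification. Your route is more conceptual and explains where the Bethe-type formula comes from, but its two central steps are precisely the ones you flag as obstacles and do not close: (i) the bipartite analogue of the Bordenave--Lelarge--Salez continuity of the matching number under local weak convergence, which is a substantial theorem in its own right and not available from anything in this paper; and (ii) the value of the matching on the core, where your appeal to the optimality of $OKS$ is not sufficient by itself --- optimality tells you $|M_{OKS}|$ and $|M^*|$ agree to $o(n)$, but to express either in terms of $w_1,\dots,w_4$ you still need a quantitative analysis of the leaf-removal process, which brings you back to the ODE integration. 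So your proposal is a viable alternative in outline, but as written it defers the hardest work to machinery the paper replaces with the $\dot U=-1$ computation.
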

Note that the formula \eqref{optformula} first appeared in Liu et al.'s work~\cite{Liu} and was
supported by numerical experiments. The result above formally proves its validity.

\subsection{Poisson Degree Distribution} \label{PoissonDD}
We first present selected asymptotic and non-asymptotic results about the fraction of matched vertices for the matching given by Greedy. Subsequently we generalize the results provided by Karp and Sipser \cite{KS} on the performance of $KS$.

Studying a simple algorithm (Greedy) on a simple random network $G$ (drawn from directed $ER$ where every edge $i \to j$ exists with probability $p$) allows us to obtain explicitly the \emph{non-asymptotic} probability mass function for $|M_G(G)|$. The following theorem also provides the asymptotic behavior of $|M_G(G)|$ for directed $ER$.
\begin{thm} [Greedy for directed $ER$]\label{greedy}
Let the network $G$ be directed $ER$ of size $n$. Then:
$$\mathbb{P}(|M_G(G)|=n-k)=\frac{\alpha_{n}(q)^2}{\alpha_{k}(q)^{2} \alpha_{n-k}(q)} q^{k^2} ,$$
where $q=1-p$ and $\alpha_{i}(q)=\prod\limits_{j=1}^i {(1-q^j)}$. For $ER(\lambda)$ (i.e. $np \to \lambda$) if $\lambda =0$ then $\lim \limits_{n \to \infty }\frac{|M_G(G)|}{n} = 0$. If $\lambda =\infty$ then $\lim \limits_{n \to \infty }\frac{|M_G(G)|}{n} = 1$. For $\lambda \in (0,\infty)$, $|M_G(G)|$ is asymptotically normal:
$$\mathcal N \left(n \frac{\lambda - \log (2 - e^{- \lambda})}{\lambda}, n \frac{1}{4 \lambda}\right).$$
\end{thm}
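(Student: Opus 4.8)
The plan is to pass to the bipartite representation $G=(L,R,E)$ with $|L|=|R|=n$, where the directed $ER$ network becomes a random bipartite graph in which each of the $n^2$ potential edges is present independently with probability $p$, and to analyze Greedy as it sweeps the right vertices $v_1,\dots,v_n$ in an arbitrary (hence, by exchangeability, any fixed) order. The key observation is that processing a right vertex only ever inspects edges incident to that vertex, so when $v_i$ is reached the edges joining it to the currently available left vertices are still unexamined and i.i.d.\ $\mathrm{Bernoulli}(p)$; this holds even for adaptive choices of which right vertex to process, since the rule cannot depend on the unexamined edges of the chosen vertex. Writing $Z_i$ for the number of available (unmatched) left vertices just before $v_i$ is processed, this yields a non-increasing Markov chain with $Z_1=n$ in which, at each step, $Z$ drops by one (a match) with probability $1-q^{Z_i}$ and stays put (a degree-zero, unmatched right vertex) with probability $q^{Z_i}$, where $q=1-p$. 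The number of unmatched vertices is the terminal value $Z_{n+1}=n-|M_G|=k$, so everything reduces to the law of this chain.

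For the exact mass function $\mathbb{P}(k):=\mathbb{P}(|M_G|=n-k)$ I would multiply the per-step probabilities along a trajectory. A trajectory with $k$ unmatched vertices performs exactly $n-k$ matches at the successive from-states $Z=n,n-1,\dots,k+1$, so the product of match-probabilities is the order-independent factor $\prod_{j=k+1}^{n}(1-q^{j})=\alpha_n(q)/\alpha_k(q)$. The unmatched steps contribute $\prod q^{Z}$, and summing over all placements of the $k$ failures among the $n-k+1$ possible states gives $\sum_{\sum c_j=k}q^{\sum_{j=k}^{n}j c_j}$. I would factor out $q^{k^2}$ and recognize the remainder as the generating function for partitions inside a $k\times(n-k)$ box, namely the Gaussian binomial coefficient, so that the sum equals $q^{k^2}\binom{n}{k}_q$ with $\binom{n}{k}_q=\alpha_n(q)/(\alpha_k(q)\alpha_{n-k}(q))$. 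Multiplying the two factors gives exactly $\alpha_n(q)^2 q^{k^2}/(\alpha_k(q)^2\alpha_{n-k}(q))$; the one nonroutine ingredient is this box-partition identity for the failure sum.

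For the degenerate regimes I would use first-moment bounds on the chain. Since $Z_i\le n$ each match-probability is at most $1-q^{n}$, whence $\mathbb{E}|M_G|\le n(1-q^{n})$; when $np\to0$ this is $o(n)$ and Markov's inequality gives $|M_G|/n\to_P0$. Symmetrically, $Z_i\ge n-i+1$ forces the expected number of unmatched vertices to be at most $\sum_{j\ge1}q^{j}=q/p\le 1/p$; when $np\to\infty$ this is $o(n)$, giving $|M_G|/n\to_P1$.

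The substantive case is $\lambda\in(0,\infty)$, where I would prove asymptotic normality by a Laplace (local-CLT) analysis of the exact mass function. With $x=k/n$ and $np\to\lambda$, the dominant behaviour is $\tfrac1n\log\mathbb{P}(k)\to\Psi(x)$, where the Riemann-sum asymptotics $\log\alpha_m(q)\sim\tfrac1p\int_0^{pm}\log(1-e^{-u})\,du$ and $k^2\log q\sim-\lambda n x^2$ produce an explicit rate function built from the dilogarithm-type primitive $L(a)=\int_0^a\log(1-e^{-u})\,du$. Solving $\Psi'(x^\ast)=0$ recovers the mode $x^\ast=\tfrac1\lambda\log(2-e^{-\lambda})$ of $k/n$, i.e.\ the stated mean $n(1-x^\ast)=n(\lambda-\log(2-e^{-\lambda}))/\lambda$ for $|M_G|$, and the limiting variance is read off from the curvature at the mode as $-n/\Psi''(x^\ast)$. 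As an independent check of both centering and scaling I would run the density-dependent (Kurtz) diffusion approximation of the $Z$-chain: the fluid ODE $\dot z=e^{-\lambda z}-1$, $z(0)=1$, integrates via $w=e^{\lambda z}$ to $z(1)=\tfrac1\lambda\log(2-e^{-\lambda})$, matching the mode, and the companion linear variance ODE $\dot\Sigma=-2\lambda e^{-\lambda z}\Sigma+(1-e^{-\lambda z})e^{-\lambda z}$, $\Sigma(0)=0$, yields the same limiting variance. The main obstacle is exactly this last step: making the $q\to1$ asymptotics of the $q$-Pochhammer products $\alpha_m(q)$ uniform in $k$ over a $\sqrt{n}$-window around the mode, so that the quadratic expansion of $\log\mathbb{P}(k)$ is controlled and convergence of the rate function can be upgraded to a genuine central limit theorem.
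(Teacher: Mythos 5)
Your proposal follows essentially the same route as the paper: you derive the exact mass function from the Markov chain of available left vertices (the paper's $\mathbb{P}(M_i=1\mid M_1,\dots,M_{i-1})=q^{n-i+1+\sum_{j<i}M_j}$), evaluate the failure-placement sum as $q^{k^2}$ times a Gaussian binomial coefficient (the paper verifies the same identity by an explicit telescoping computation), and then obtain the LLN and CLT by a Laplace-type local expansion of $\log\mathbb{P}(M^{(n)}=k)$ around the mode — exactly the paper's analysis via the ratio $\mathbb{P}(k+1)/\mathbb{P}(k)$ and the Taylor expansion of the $f_j$ with uniform Riemann-sum error control, which is the very step you correctly flag as the main technical obstacle. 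Your first-moment treatment of the $\lambda=0,\infty$ cases and the Kurtz ODE cross-check are harmless additions, not a different method.
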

Some of the results in Theorem \ref{greedy} remain valid for a larger class of random networks including directed and undirected networks. 
\begin{thm} [Greedy for asymptotically Poisson degree distributions]\label{greedymean}
Assume $G$ is one of $ER(\lambda)$, $UFS(\lambda)$, $DD(p)$ and $DD(p,p)$ where probability distribution $p$ is $\Poisson(\lambda)$. If $\lambda =0$ (respectively $\lambda=\infty$) then $\lim \limits_{n \to \infty }\frac{|M_G(G)|}{n} = 0$ (respespectively $1$). For $\lambda \in (0,\infty)$, $\lim \limits_{n \to \infty }\frac{|M_G(G)|}{n} = 1 - \frac{ \log (2 - e^{- \lambda})}{\lambda}$.
\end{thm}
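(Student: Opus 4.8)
The plan is to reduce the statement, for all four models simultaneously, to the single exactly-solvable computation already available for directed $ER(\lambda)$ in Theorem \ref{greedy}, using the concentration of Theorem \ref{conv} to pass from the random quantity to a limiting mean, and a fluid-limit (differential-equation) analysis of Greedy to show that all four models share the same scaling limit. First I would dispose of the boundary cases: for $\lambda=0$ the average degree tends to $0$, so $\mathbb{E}|E|=o(n)$ and, since $|M_G(G)|\le|E|$, we get $|M_G(G)|/n\to 0$; for $\lambda=\infty$ a processed right vertex fails to be matched only if every one of its neighbors is already removed, an event whose probability vanishes as the average degree grows, so $|M_G(G)|/n\to 1$ (alternatively this is the $\lambda\to\infty$ limit of the formula together with a monotonicity/coupling argument). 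I would also note that $1-\log(2-e^{-\lambda})/\lambda\to 0$ as $\lambda\to 0^+$ and $\to 1$ as $\lambda\to\infty$, so the three regimes agree continuously.

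For $\lambda\in(0,\infty)$ the first step is concentration. By Lemma \ref{LipProofHeuristics} the function $|M_G|$ has the Lipschitz property, and since the average degree is finite we have $\sup_n \mathbb{E}|E|/n<\infty$, so Theorem \ref{conv} gives $|M_G(G)|/n-\mathbb{E}|M_G(G)|/n\to 0$ (exponentially fast). Hence it suffices to prove that $\mathbb{E}|M_G(G)|/n$ converges to the claimed constant, equivalently that $|M_G(G)|/n$ converges in probability to it, for each model. For directed $ER(\lambda)$ this is exactly Theorem \ref{greedy}: the asymptotic mean of $|M_G(G)|$ is $n(\lambda-\log(2-e^{-\lambda}))/\lambda$ (boundedness of $|M_G|/n$ upgrades the asymptotic-normality statement to convergence of the mean), giving the constant $1-\log(2-e^{-\lambda})/\lambda$. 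This case serves as the anchor.

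The core step is to identify a common fluid limit of Greedy across the remaining models. Because Greedy selects right vertices in an order that does not depend on the realized edges, I would use deferred decisions and track a single macroscopic state variable, the fraction $\alpha=a/n$ of still-available left vertices (equivalently, under the Poisson assumption, of available left half-edges). When a right vertex of in-degree $\mathrm{Poisson}(\lambda)$ is processed, the number of its in-edges landing on available left vertices is, by Poisson thinning, $\mathrm{Poisson}(\lambda\alpha)$ in the sparse limit, so it fails to be matched with probability $e^{-\lambda\alpha}$; this is precisely the transition law already seen in directed $ER$, where it reads $q^{\alpha n}=(1-p)^{\alpha n}\to e^{-\lambda\alpha}$. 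Wormald's differential-equation method then yields $d\alpha/dx=-(1-e^{-\lambda\alpha})$ with $\alpha(0)=1$ over rescaled time $x=t/n\in[0,1]$, whose solution satisfies $e^{\lambda\alpha(1)}=2-e^{-\lambda}$, i.e. $\alpha(1)=\lambda^{-1}\log(2-e^{-\lambda})$ and $|M_G(G)|/n\to 1-\alpha(1)$, matching the anchor. For $UFS(\lambda)$ I would invoke its standard asymptotic equivalence (contiguity) with $ER(\lambda)$ of matching edge count, together with Lemma \ref{RNDegDist} to guarantee the Poisson empirical degrees; for $DD(p)$ and $DD(p,p)$ the same one-dimensional reduction is carried out directly in the configuration model.

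The hard part will be rigorously justifying this one-dimensional reduction in the configuration-model and undirected cases. One must show that (i) the local neighborhoods seen by Greedy are asymptotically those of independent uniform attachments, so that multi-edges, the single omitted half-edge (controlled by Lemma \ref{RNDegDist}), and, in the undirected model, the symmetric correlation between the edges $(i,j)$ and $(j,i)$, all have $o(n)$ cumulative effect; and (ii) the residual out-degree distribution of the available left vertices remains $\mathrm{Poisson}$ throughout the process, so that the per-step match probability keeps the clean form $1-e^{-\lambda\alpha}$ and the state genuinely collapses to the single coordinate $\alpha$. This Poisson-preservation under the (size-biased) removal dynamics is the crucial property that makes the otherwise high-dimensional degree-sequence dynamics — which for a general distribution would require the full machinery of Theorem \ref{OKS} — degenerate to the scalar ODE above, and it is exactly what fails for non-Poisson $p$.
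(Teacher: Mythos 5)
Your proposal is correct in outline and rests on the same two pillars as the paper's proof --- concentration via the Lipschitz property and Theorem \ref{conv}, and the observation that the answer for directed $ER(\lambda)$ is already pinned down exactly by Theorem \ref{greedy} --- but it implements the model-transfer step by a genuinely different mechanism. The paper does not introduce a new scalar ODE at all: it reuses the infinite-dimensional degree-sequence fluid limit already rigorously established (via Kurtz's theorem) in the proof of Theorem \ref{OKS}, simply writing down the Greedy drift functions $\mathbb F,\mathbb G$ in that framework and then noting that, since the limiting dynamics is a functional of the asymptotic empirical degree distributions alone and all four models share the $\Poisson(\lambda)$ empirical distribution (Lemma \ref{RNDegDist}), the limit of $|M_G(G)|/n$ must coincide across the four models; the common value is then imported wholesale from Theorem \ref{greedy} without ever solving the ODEs. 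Your route instead collapses the state to the single coordinate $\alpha$ (fraction of available left vertices), derives $d\alpha/dx=-(1-e^{-\lambda\alpha})$ by Poisson thinning, and solves it to recover $\alpha(1)=\lambda^{-1}\log(2-e^{-\lambda})$ independently. What your approach buys is an explicit, self-contained derivation of the constant and a transparent explanation of why Poisson is special; what it costs is exactly the step you flag as hard --- rigorously justifying the one-dimensional collapse (Poisson preservation of residual out-degrees under the size-biased removal dynamics, plus the $o(n)$ control of multi-edges and the undirected symmetry) --- which the paper's route renders unnecessary, because the high-dimensional system is already justified for arbitrary degree distributions and never needs to be reduced or integrated. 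If you carry out your plan, that preservation argument is where all the remaining work lies; alternatively you could shortcut it by citing the already-proved Kurtz-type approximation from the proof of Theorem \ref{OKS}, which is precisely what the paper does. (Your separate treatment of the boundary cases $\lambda=0$ and $\lambda=\infty$, which the paper's proof leaves implicit, is a welcome addition, though the $\lambda=\infty$ coupling argument should be written out for the configuration-model cases.)
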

Similar results hold for $KS$. In fact, Karp and Sipser \cite{KS} proved that, for the classical undirected Erdos-Renyi random network (denoted by undirected $ER(\lambda), \lambda \in (0,\infty)$ here), $KS$ is optimal. They split the running of the algorithm into two phases. Phase 1 begins when the algorithms starts and finishes the first time there is no vertex of degree one in the network, when phase 2 starts and proceeds until the algorithm removes all edges from the network. For network $G$ let $U(G)$, $U_1(G)$ and $U_2$(G) be the number of vertices left unmatched {(became of degree zero before being removed from the network)} when running maximum matching, phase 1 and phase 2 respectively, $H=(V',E')$ be the remaining network at the beginning of phase 2. Hence $|M^*(G)|=n-U(G), |M_{KS}(G)|=n-U_1(G)-U_2(G), |M_{KS}(H)|=|V'|-U_2(G)$. Since there is no deviation from maximum matching as long as vertices of degree one exists, we have $U_1(G) \leq U(G) \leq U_1(G) + U_2(G)$. {Karp and Sipser} show $\frac{U_2(G)}{n} \to 0$ as $n \to \infty$ so the algorithm is optimal, i.e. $\lim \limits_{n \to \infty} \frac{|M_{KS}(G)|}{n}=\lim \limits_{n \to \infty} \frac{|M^*(G)|}{n}$. Further, they show $\frac{U_1(G)}{n} \to k(\lambda)$ and $\frac{|V'|}{n} \to h(\lambda)$ and find functions $k,h$ as $k(\lambda)= \frac{\gamma_* + \gamma^* + \gamma_* \gamma^*}{\lambda} -1 , h(\lambda)= \frac{(1-\gamma_*)(\gamma^* - \gamma_*)}{\lambda}$ where $\gamma_*$ is the smallest root of $\gamma=\lambda \exp (- \lambda e^{- \gamma})$ and $\gamma^*= \lambda e^ {- \gamma_*}$. For $\lambda \leq e$ we have $h(\lambda)=0$ because of $\gamma_* = \gamma^*$. In the following theorem, we generalize these results to a larger class of random networks. 
\begin{thm} [KS for asymptotically Poisson degree distributions]\label{KSgeneral}
Assume $G$ is one of $ER(\lambda)$, $UFS(\lambda)$, $DD(p)$ and $DD(p,p)$ where probability distribution $p$ is $\Poisson(\lambda)$. If $\lambda =0$ then $\lim \limits_{n \to \infty }\frac{|M_{KS}(G)|}{n} =\lim \limits_{n \to \infty }\frac{|M^*(G)|}{n} = 0$. If $\lambda =\infty$ then $\lim \limits_{n \to \infty }\frac{|M_{KS}(G)|}{n} =\lim \limits_{n \to \infty }\frac{|M^*(G)|}{n} = 1$. For $\lambda \in (0,\infty)$, we have
$$\lim \limits_{n \to \infty }\frac{|M_{KS}(G)|}{n} =\lim \limits_{n \to \infty }\frac{|M^*(G)|}{n} = 1 - k( \lambda) .$$
Furthermore, $\frac{U_2(G)}{n} \to 0$ , $\frac{U_1(G)}{n} \to k(\lambda)$ and $\frac{|V'|}{n} \to h(\lambda)$ as $n \to \infty$.
\end{thm}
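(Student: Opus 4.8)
The plan is to reduce all four models to a single common object --- a random network whose asymptotic empirical input and output degree distributions both equal $\Poisson(\lambda)$ --- and then to combine the optimality already available from Theorem~\ref{OKS} with the explicit maximum-matching formula of Theorem~\ref{MMSize}. First I would record that each of $ER(\lambda)$, $UFS(\lambda)$, $DD(p)$ and $DD(p,p)$ has asymptotic empirical degree distribution $\Poisson(\lambda)$: for $ER(\lambda)$ a vertex degree is Binomial with mean $np^{(n)} \to \lambda$ and hence converges to $\Poisson(\lambda)$, an analogous counting argument handles $UFS(\lambda)$, while for the two $DD$ models this is immediate from Lemma~\ref{RNDegDist} since $p = \Poisson(\lambda)$ has finite mean $\mu = \lambda$. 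In particular $\lim_{n\to\infty}|E|/n = \mu = \lambda < \infty$ for every $\lambda \in (0,\infty)$, so the hypotheses of Theorems~\ref{conv}, \ref{OKS} and \ref{MMSize} are all met.

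Next I would dispatch the two equalities and the boundary cases. For $\lambda \in (0,\infty)$ the common mean $\mu = \lambda$ is finite, so Theorem~\ref{OKS} together with the remark following it gives $\lim_n |M_{KS}(G)|/n = \lim_n |M^*(G)|/n$, which yields the two claimed equalities at once and, in the phase language, amounts to $U_2(G)/n \to 0$. For $\lambda = 0$ the mean degree is $0$, so $|M^*(G)| \le |E| = o(n)$ and both ratios tend to $0$. For $\lambda = \infty$ I would use monotonicity of maximum matching under edge addition together with a coupling that embeds a $\Poisson(\lambda_0)$ network into the $\lambda = \infty$ one, giving $\liminf_n |M^*(G)|/n \ge 1 - k(\lambda_0)$ for every finite $\lambda_0$; letting $\lambda_0 \to \infty$ forces the limit to $1$, and optimality transfers this to $|M_{KS}|$.

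The value $1 - k(\lambda)$ then comes from specializing Theorem~\ref{MMSize}. For $p = \Poisson(\lambda)$ one has $\Phi_{in}(u) = \Phi_{out}(u) = e^{\lambda(u-1)}$ and, because $\mu = \lambda$, also $\phi_{in}(u) = \phi_{out}(u) = e^{\lambda(u-1)}$. Using the symmetry $p_{in} = p_{out}$ I would look for a fixed point of the form $w_1 = w_3 = a$, $w_2 = w_4 = b$, which collapses the four equations of Theorem~\ref{MMSize} to $e^{-\lambda a} = 1 - b$ and $e^{\lambda(b-1)} = a$, i.e.\ to the single scalar equation $a = e^{-\lambda e^{-\lambda a}}$. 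Setting $\gamma_* = \lambda a$ turns this into $\gamma_* = \lambda\exp(-\lambda e^{-\gamma_*})$, identifying $a = \gamma_*/\lambda$ and $1 - b = \gamma^*/\lambda$ with Karp and Sipser's roots; the ``smallest solution'' clause matches the choice of the smallest root $\gamma_*$. Substituting back into \eqref{optformula} and simplifying gives $U^* = e^{-\lambda a} + a - 1 + a\gamma^* = (\gamma^* + \gamma_* + \gamma_*\gamma^*)/\lambda - 1 = k(\lambda)$, so that $\lim_n (1 - |M^*(G)|/n) = k(\lambda)$ and hence $\lim_n |M_{KS}(G)|/n = 1 - k(\lambda)$.

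It remains to pin down the phase quantities $U_1(G)/n \to k(\lambda)$ and $|V'|/n \to h(\lambda)$, and this is where the main obstacle lies. Karp and Sipser established these limits for undirected $ER(\lambda)$ by tracking the degree sequence through phase~1 and analyzing the residual network $H$ at the onset of phase~2. My plan is to transfer their computation to the other three models by invoking the fact --- already underlying the differential-equation analysis behind Theorem~\ref{OKS} --- that the fluid limit of the Karp-Sipser degree-sequence dynamics, and therefore the phase-1 yield $U_1/n$ and the core size $|V'|/n$, depend on the model only through its asymptotic empirical input and output degree distributions. Since these are $\Poisson(\lambda)$ for all four models, the limits coincide with those for undirected $ER(\lambda)$, namely $k(\lambda)$ and $h(\lambda)$, with $U_2/n \to 0$ consistent with the optimality established above; Theorem~\ref{conv} then upgrades these convergences from expectation to the almost-sure and in-probability statements. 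The delicate point --- the crux of the argument --- is justifying this universality rigorously: that the configuration-model representation renders the phase decomposition insensitive to which of $ER$, $UFS$ or $DD$ produced the shared Poisson degree profile.
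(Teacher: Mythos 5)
Your proposal follows essentially the same route as the paper: the paper's proof likewise reduces all four models to their shared $\Poisson(\lambda)$ asymptotic empirical degree distribution, argues that the KS fluid-limit dynamics (the ODE system from the proof of Theorem~\ref{OKS}) depend only on that distribution, and then imports Karp and Sipser's undirected $ER(\lambda)$ results for $U_1$, $U_2$ and $|V'|$. Your verification that $U^*=k(\lambda)$ via Theorem~\ref{MMSize} is exactly the computation the paper records in the remark following the theorem statement, and the universality step you flag as the delicate point is asserted just as tersely in the paper's own proof.
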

{\begin{remark}
Note that the results in Theorem \ref{KSgeneral} are consistent with Theorem \ref{MMSize} as follows. Letting $p_{in},p_{out}$ be $\Poisson(\lambda)$, calculations show $U^*=k(\lambda)$ because of $\mu=\lambda$, $\Phi_{in}(u)= \Phi_{out}(u)=\phi_{in}(u)= \phi_{out}(u)= \exp \left ( \lambda (u-1) \right )$.
\end{remark}}
Results presented in Theorems \ref{greedymean} and \ref{KSgeneral} are based on the fact that in all mentioned random networks, the asymptotic empirical degree distribution is \emph{Poisson}.

\section{{Conclusion}}

{We proved that the $OKS$ algorithm is (asymptotically) optimal for determining a set of vertices where controllers should be applied to. Indeed, first benefiting from the connection between structural controllability of networks and maximum matching problems (minimum inputs theorem) we introduced simple fast matching algorithms $OKS$ and $KS$. Further, using topologies extracted from real networks, we empirically showed that the minimum number of controllers for structural controllability heavily depends on the degree distribution of the network, which in turn implies that the assumption of a random network with specified degree distribution is reasonable for many real world networks.
Finally, new proof techniques introduced in this study enable the rigorous analysis of the the performance of a class of fast matching algorithms for random networks.}

{Ruths and Ruths \cite{ruths2014control} showed that existing random network models, while capturing the key features for predicting the minimum number of controllers, are not predicting more detailed control profiles of real networks. This calls for the development of new random network models that match control profiles, and associated fast control algorithms.}

\appendices

\section{Proof of Lemma \ref{RNDegDist}}
For an arbitrary $\epsilon>0$, let $N$ be large enough such that $\sum \limits_{k=N+1}^\infty kp(k) < \frac{\epsilon}{2}$. Further, for  random network $G=(V,E), |V|=n$, remove some edges from $G$ in order to have no vertex of degree larger than $N$ to get network $G'=(V,E')$. Now for every vertex $i \in V$ there are possibly two reasons for the difference between $D_i$ and $\deg_{G'}(i)$:
\begin{itemize}
\item
the set of edges we removed from $G$ to get $G'$

\item
multiple edges in the network $G'$
\end{itemize}
so,
\begin{eqnarray} \label{totaldiff}
\Big | |\{ i \in V : D_i=k\}| - |\{ i \in V : \deg_{G'}(i)=k\}| \Big | \leq |E|-|E'| + \sum \limits_{i=1}^n 1_{M_i \neq 0}
\end{eqnarray}
where $M_i$ is the number of multiple edges in $G'$ connected to vertex $i$. There are at most $N \choose 2$ pairs of half-edges connected to vertex $i$ and for every two of them the probability of the outcome $A_j$ that they both are connected to vertex $j$ is at most $\frac{{N \choose 2}}{{D \choose 2}}$ where $D=2|E'|$. So,
$$ M_i \leq {N \choose 2} \sum \limits_{j=1}^n 1_{A_j} \quad,\quad \mathbb P (A_j | D) \leq \frac{{N \choose 2}}{{D \choose 2}}.$$
By Markov's inequality for any $\delta > 0$ we have: 
\begin{eqnarray*}
\mathbb P \left ( \frac{1}{n}\sum \limits_{i=1}^n 1_{M_i \neq 0} > \delta \Big | D \right ) 
\leq \frac{1}{n \delta}\sum \limits_{i=1}^n \mathbb P (M_i \neq 0 \big | D ) = \frac{1}{n \delta}\sum \limits_{i=1}^n \mathbb P (M_i \geq 1 \big | D ) 
\leq \frac{1}{n \delta}\sum \limits_{i=1}^n \mathbb E (M_i \big | D ) \leq \frac{2n{N \choose 2}^2}{\delta (D-1)^2} .
\end{eqnarray*}
But $\lim \limits_{n \to \infty} \frac{|E|}{n} = \mu$ and 
\begin{eqnarray} \label{edgediff}
\lim \limits_{n \to \infty} \frac{|E|-|E'|}{n} < \frac{\epsilon}{2}
\end{eqnarray}
(since $\sum \limits_{k=N+1}^\infty kp(k) < \frac{\epsilon}{2}$ ) imply $\lim \limits_{n \to \infty} \frac{D-1}{n} > \mu - \epsilon$ i.e. 
\begin{eqnarray} \label{multiedge}
\frac{1}{n}\sum \limits_{i=1}^n 1_{M_i \neq 0} \to_P 0
\end{eqnarray}
On the other hand, for all $k=1,2, \ldots$ by the Law of Large Numbers:
\begin{eqnarray}\label{LLN}
\lim \limits_{n \to \infty} \frac{| \{ i \in V : D_i=k \} |}{n}=p(k)
\end{eqnarray}
Finally, because the only reason for the difference between $\deg_{G}(i)$ and $\deg_{G'}(i)$ is the set of edges we removed from $G$ to get $G'$ we have:
\begin{eqnarray}\label{degdiff}
\Big | |\{ i \in V : \deg_{G}(i)=k\}| - |\{ i \in V : \deg_{G'}(i)=k\}| \Big | \leq |E|-|E'|
\end{eqnarray}
Putting \eqref{totaldiff}, \eqref{edgediff}, \eqref{multiedge}, \eqref{LLN}, \eqref{degdiff} all together: 
$$\lim \limits_{n \to \infty} \mathbb P \left ( \Big | \frac{|\{ i \in V : \deg_{G}(i)=k\}|}{n} - p(k) \Big | > \epsilon \right ) = 0$$
Since the function $f(G)=|\{ i \in V : \deg_{G}(i)=k\}|$ has the Lipschitz property by Theorem \ref{conv}, a.s. convergence holds as well i.e. $\hat p(k)= p(k)$

For $DD(p_{in},p_{out})$, the changes in the asymptotic empirical degree distributions due to omission of $$\max \{ \sum \limits_{i=0}^n D^{(in)}_i, \sum \limits_{i=0}^n D^{(out)}_i\} - \min \{ \sum \limits_{i=0}^n D^{(in)}_i, \sum \limits_{i=0}^n D^{(out)}_i \}$$ additional half-edges is at most $\frac{1}{n} |\sum \limits_{i=0}^n D^{(in)}_i - \sum \limits_{i=0}^n D^{(out)}_i| \to 0$ because by the Law of Large Numbers $\lim \limits_{n \to \infty} \frac{1}{n} \sum \limits_{i=0}^n D^{(in)}_i =\lim \limits_{n \to \infty} \frac{1}{n} \sum \limits_{i=0}^n D^{(out)}_i = \mu$ as $n \to \infty$. The contribution of multiple edges is asymptotically zero similarly.

\section{Proof of Theorem \ref{conv}}
To prove convergence and concentration inequalities for real-valued functions for random networks we use some classical notions of probability such as martingale difference sequences. Rhee \cite[Theorem 1]{wansoo} presents a concentration inequality for martingale difference sequences. Here we use a slightly more general version of it. The proof presented by Rhee \cite{wansoo} is valid by following the same line of reasoning.
\begin{thm}\label{mdsineq}
Let $X_i , i=1, 2, \ldots, k$ be a martingale difference sequence. If $ \max \limits_{1 \leq i \leq k} \| X_i \|_{\infty} \leq M < \infty$ and $ \sum \limits_{i=1}^k \mathbb{E}(X_i^2 | \mathcal{F}_{i-1}) \leq a^2< \infty $ then for all $t \geq 0$
$$ \mathbb{P}( | \sum \limits_{i=1}^k X_i | > t ) \leq 2 \exp (- \frac {a^2}{M^2} \rho (\frac {Mt}{a^2} )) $$
where $\rho(x)=(1+x) \log(1+x)-x$ for $x \geq 0$.
\end{thm}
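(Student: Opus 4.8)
The plan is to establish the one-sided bound $\mathbb{P}(S_k > t) \le \exp(-\frac{a^2}{M^2}\rho(\frac{Mt}{a^2}))$ for the partial sum $S_k = \sum_{i=1}^k X_i$, and then obtain the two-sided statement by applying the same argument to $(-X_i)$ and taking a union bound. The backbone is the exponential-moment (Chernoff) method: for any $\lambda > 0$, Markov's inequality gives $\mathbb{P}(S_k > t) \le e^{-\lambda t}\,\mathbb{E}(e^{\lambda S_k})$, so everything reduces to controlling the moment generating function $\mathbb{E}(e^{\lambda S_k})$ and then optimizing the resulting exponent over $\lambda$.

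First I would prove a single-step bound on the conditional moment generating function of a bounded, conditionally mean-zero increment. Writing $\sigma_i^2 = \mathbb{E}(X_i^2\mid\mathcal{F}_{i-1})$ and using that $\psi(x) = (e^x - 1 - x)/x^2$ is nondecreasing on $\mathbb{R}$, the bound $X_i \le M$ together with $\lambda > 0$ gives the pointwise estimate $e^{\lambda X_i} \le 1 + \lambda X_i + (\lambda X_i)^2 \psi(\lambda M)$. Taking conditional expectation, using $\mathbb{E}(X_i\mid\mathcal{F}_{i-1}) = 0$ and then $1 + y \le e^y$, yields the familiar Bennett-type one-step estimate
$$\mathbb{E}(e^{\lambda X_i}\mid\mathcal{F}_{i-1}) \le \exp\!\left(g\,\sigma_i^2\right), \qquad g := \frac{e^{\lambda M} - 1 - \lambda M}{M^2} \ge 0.$$

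The main obstacle is combining these one-step bounds into an estimate for $\mathbb{E}(e^{\lambda S_k})$, since the $\sigma_i^2$ are random and the hypothesis controls only their \emph{sum} almost surely, $\sum_{i=1}^k \sigma_i^2 \le a^2$, rather than each term. The clean way around this is a supermartingale construction: with $V_j = \sum_{i=1}^j \sigma_i^2$ and $Y_j = \exp(\lambda S_j - g V_j)$, the one-step bound gives $\mathbb{E}(Y_j\mid\mathcal{F}_{j-1}) \le Y_{j-1}$, so $(Y_j)$ is a supermartingale and $\mathbb{E}(Y_k) \le \mathbb{E}(Y_0) = 1$. Because $g \ge 0$ and $V_k \le a^2$ almost surely, this produces $\mathbb{E}(e^{\lambda S_k}) \le e^{g a^2} = \exp\!\big(a^2(e^{\lambda M} - 1 - \lambda M)/M^2\big)$, legitimately replacing the random quadratic variation by its deterministic bound.

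Finally I would insert this into the Chernoff bound and optimize. The exponent $-\lambda t + a^2(e^{\lambda M} - 1 - \lambda M)/M^2$ is minimized where $e^{\lambda M} = 1 + Mt/a^2$, i.e. $\lambda = M^{-1}\log(1 + Mt/a^2)$; substituting back and simplifying collapses the exponent exactly to $-\frac{a^2}{M^2}\rho(\frac{Mt}{a^2})$ with $\rho(x) = (1+x)\log(1+x) - x$. Running the identical argument on $(-X_i)$ bounds $\mathbb{P}(S_k < -t)$ by the same quantity, and the union bound supplies the factor of $2$. The only genuinely delicate point is the supermartingale step that converts the almost-sure control of $\sum_i \sigma_i^2$ into the product-form bound on the full moment generating function; the remainder is the standard Bennett/Bernstein optimization.
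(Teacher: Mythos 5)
Your proposal is correct and complete. Note that the paper itself does not actually prove this statement: it cites Rhee's Theorem 1 and asserts that ``the proof presented by Rhee is valid by following the same line of reasoning'' for this slightly more general version, so there is no in-paper argument to compare against line by line. What you have written is the standard Bennett--Freedman derivation that underlies results of this type: the one-step bound $\mathbb{E}(e^{\lambda X_i}\mid\mathcal{F}_{i-1})\leq \exp\bigl(\sigma_i^2 (e^{\lambda M}-1-\lambda M)/M^2\bigr)$ via monotonicity of $(e^x-1-x)/x^2$, the Chernoff bound, and the optimization $\lambda = M^{-1}\log(1+Mt/a^2)$ collapsing the exponent to $-\frac{a^2}{M^2}\rho(\frac{Mt}{a^2})$ all check out. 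You also correctly identify and resolve the only genuinely delicate point, namely that the hypothesis controls the predictable quadratic variation $V_k=\sum_i \mathbb{E}(X_i^2\mid\mathcal{F}_{i-1})$ only in aggregate: since each $\sigma_i^2$ is $\mathcal{F}_{i-1}$-measurable, your process $Y_j=\exp(\lambda S_j - gV_j)$ is indeed a supermartingale, and $V_k\leq a^2$ a.s.\ together with $g\geq 0$ legitimately converts $\mathbb{E}(Y_k)\leq 1$ into $\mathbb{E}(e^{\lambda S_k})\leq e^{ga^2}$. The symmetric application to $(-X_i)$ and the union bound give the factor $2$. In short, your write-up supplies the self-contained proof that the paper omits.
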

We prove the following stronger theorems:
\begin{thm} \label{convGER}
Assume $f$ has the Lipschitz property and $G=(V,E)$ is a random network in which $V= \{1,2, \ldots, n \}$ and for $i,j \in V$, the edge $i \to j$ exists independently with probability $p_{ij}^{(n)}$. If 
\begin{eqnarray}
\label{ger1}
\limsup \limits_{n \rightarrow \infty} \frac{1}{n^2} \sum \limits_{i,j=1}^n p_{ij}^{(n)} (1-p_{ij}^{(n)}) = 0
\end{eqnarray}
then $\frac{ f(G) - \mathbb{E}(f(G))}{n} \rightarrow _P 0$ as $n \rightarrow \infty$. Furthermore, if
\begin{eqnarray}
\label{ger2}
\limsup \limits_{n \rightarrow \infty} \frac{\log n}{n^2} \sum \limits_{i,j=1}^n p_{ij}^{(n)} (1-p_{ij}^{(n)}) =0
\end{eqnarray}
then $\frac{f(G) - \mathbb{E}(f(G))}{n} \rightarrow 0 \as$ as $n \rightarrow \infty$. When the average degree is finite (e.g. $ER(\lambda)$ for $\lambda < \infty$) the rate of convergence is exponential. In general given \begin{eqnarray}
\label{ger3}
\sup \limits_{n \geq 1} \frac{1}{n} \sum \limits_{i,j=1}^n p_{ij}^{(n)} (1-p_{ij}^{(n)}) \leq C < \infty
\end{eqnarray}
we have:
$$ \mathbb{P}(  \frac{| f(G) - \mathbb{E}(f(G)) |}{n} > \epsilon ) \leq 2 \exp (- nC \rho (\frac { \epsilon}{C} ) )$$
\end{thm}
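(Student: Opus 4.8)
The plan is to build the edge-exposure (Doob) martingale on the $n^2$ independent edge indicators and feed it into Theorem~\ref{mdsineq}. Fix an arbitrary ordering $e_1,\ldots,e_{n^2}$ of the potential directed edges, let $Z_k=\mathbf{1}\{e_k\in E\}$ with $\mathbb{P}(Z_k=1)=p_k$ (where $p_k=p_{ij}^{(n)}$ for the pair corresponding to $e_k$), and set $\mathcal{F}_k=\sigma(Z_1,\ldots,Z_k)$. Define $Y_k=\mathbb{E}[f(G)\mid\mathcal{F}_k]$, so that $Y_0=\mathbb{E}f(G)$ and $Y_{n^2}=f(G)$, and take $X_k=Y_k-Y_{k-1}$ as the martingale difference sequence.

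Next I would use the Lipschitz property (Definition~\ref{lipschitz}) to control $X_k$. Writing $g_1$ and $g_0$ for the conditional expectations of $f$ given $\mathcal{F}_{k-1}$ together with $\{Z_k=1\}$ and $\{Z_k=0\}$ respectively, a short computation gives $X_k=(1-p_k)(g_1-g_0)$ on $\{Z_k=1\}$ and $X_k=-p_k(g_1-g_0)$ on $\{Z_k=0\}$. Coupling the two configurations on all coordinates other than $e_k$, the Lipschitz property forces $|g_1-g_0|\le 1$ pointwise, so $\|X_k\|_\infty\le 1$ and hence $M=1$. Moreover
\[
\mathbb{E}[X_k^2\mid\mathcal{F}_{k-1}] = p_k(1-p_k)(g_1-g_0)^2 \le p_k(1-p_k),
\]
whence $\sum_k\mathbb{E}[X_k^2\mid\mathcal{F}_{k-1}]\le\sum_{i,j}p_{ij}^{(n)}(1-p_{ij}^{(n)})=:a^2$. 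This is the crucial gain: it replaces the crude bound $n^2$ by the variance proxy $a^2$, which is exactly the quantity appearing in hypotheses \eqref{ger1}--\eqref{ger3}. Applying Theorem~\ref{mdsineq} with $t=n\epsilon$ yields the master inequality
\[
\mathbb{P}\left(|f(G)-\mathbb{E}f(G)|>n\epsilon\right)\le 2\exp\left(-a^2\,\rho(n\epsilon/a^2)\right).
\]

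For the exponential rate under \eqref{ger3}, I would show that $s\mapsto s\,\rho(n\epsilon/s)=(s+n\epsilon)\log(1+n\epsilon/s)-n\epsilon$ is nonincreasing, since its derivative equals $\log(1+n\epsilon/s)-n\epsilon/s\le 0$. As $a^2\le Cn$ under \eqref{ger3}, monotonicity gives $a^2\rho(n\epsilon/a^2)\ge Cn\,\rho(\epsilon/C)$, which is precisely the claimed bound $2\exp(-nC\rho(\epsilon/C))$. For \eqref{ger1} and \eqref{ger2} I would instead lower bound the exponent using the elementary inequality $\rho(x)\ge x^2/(2(1+x/3))$ for $x\ge 0$, which gives
\[
a^2\,\rho(n\epsilon/a^2)\ge \frac{n^2\epsilon^2}{2a^2+\tfrac{2}{3}n\epsilon}.
\]
Under \eqref{ger1}, i.e. $a^2=o(n^2)$, the right-hand side diverges, so $\frac{f(G)-\mathbb{E}f(G)}{n}\to_P 0$; under the stronger \eqref{ger2}, i.e. $a^2=o(n^2/\log n)$, it exceeds $(1+\delta)\log n$ eventually, making the probabilities summable, and Borel--Cantelli delivers almost-sure convergence.

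I expect the main obstacle to be the sharp conditional-variance estimate: establishing $|g_1-g_0|\le 1$ rigorously by the coupling argument and thereby obtaining $a^2$ rather than $n^2$ as the variance bound is what makes the three hypotheses the natural ones and drives all three conclusions. The monotonicity of $s\mapsto s\,\rho(n\epsilon/s)$ is the clean device that converts the martingale inequality into the stated constant $C$, while the lower bound on $\rho$ reduces parts one and two to routine asymptotics.
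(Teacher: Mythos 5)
Your proposal matches the paper's proof essentially step for step: the same edge-exposure Doob martingale on the independent edge indicators, the same computation giving $X_k=(1-p_k)(g_1-g_0)$ on $\{Z_k=1\}$ and $-p_k(g_1-g_0)$ on $\{Z_k=0\}$ with $\mathbb{E}[X_k^2\mid\mathcal{F}_{k-1}]=p_k(1-p_k)(g_1-g_0)^2\le p_k(1-p_k)$, and the same application of Theorem~\ref{mdsineq} with $t=n\epsilon$ and variance proxy $a^2=\sum_{i,j}p_{ij}^{(n)}(1-p_{ij}^{(n)})$. Your endgame is in fact slightly more careful than the paper's: where the paper invokes $\lim_{x\to 0}\rho(x)/x^2=\tfrac12$ to pass to a Gaussian-type tail (an upper bound on $\rho$, so as literally written the inequality points the wrong way) and simply substitutes $a^2=nC$ for the exponential rate, your lower bound $\rho(x)\ge x^2/(2(1+x/3))$ and the monotonicity of $s\mapsto s\,\rho(n\epsilon/s)$ deliver the same three conclusions rigorously.
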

\begin{proof}
To have more convenient notation we enumerate all possible edges $i \to j$ from $1$ to $k$ where $k=n^2$ for directed $ER$ and $k=\frac{n(n+1)}{2}$ for undirected $ER$. Indeed, $Z_i, i=1, 2 , \ldots, k$ are independent Bernoulli random variables showing the existence of edges, i.e. if $Z_i=1$ the corresponding edge exists and if $Z_i=0$ the corresponding edge does not exist. Let $p_i, i=1,2, \ldots$ be edge existence probabilities, $\mathbb{P}(Z_i=1)$, $\mathcal{F}_0$ be the trivial sigma-field and for $i=1, \ldots, k$ let $\mathcal{F}_i = \sigma (Z_1, \ldots, Z_i)$. Now define a martingale difference sequence as $X_i= \mathbb{E}(f(G) | \mathcal{F}_i) - \mathbb{E}(f(G) | \mathcal{F}_{i-1}) , i=1, \ldots, k$ so $\mathbb{E}(f(G) | \mathcal{F}_k)=f(G) , \mathbb{E}(f(G) | \mathcal{F}_0)=\mathbb{E}(f(G))$. Define:
$$ U_i=\mathbb{E}(f(G) | Z_1, \ldots, Z_{i-1}, Z_i=1) $$
$$ V_i=\mathbb{E}(f(G) | Z_1, \ldots, Z_{i-1}, Z_i=0) $$
Thus,
$$\mathbb{E}(f(G) | Z_1, \ldots, Z_{i-1})=p_i U_i + (1-p_i) V_i $$
$$X_i= \left\{
\begin{array}{ll}
(1-p_i)(U_i-V_i)  \;\;\;\;\;\; Z_i=1 \\
-p_i(U_i-V_i)  \;\;\;\;\;\;\;\;\;\;\;\; Z_i=0
\end{array}
\right.$$
$$ \mathbb{E}(X_i^2 | \mathcal{F}_{i-1}) = p_i (1-p_i) (U_i-V_i)^2 $$
Since $f$ has the Lipschitz property, $| U_i - V_i | \leq M$ ($M=1$ for directed $ER$ and $M=2$ for undirected $ER$):
$$ \max_{1 \leq i \leq k} \| X_i \|_{\infty} \leq \max_{1 \leq i \leq k} \max \{ p_i , 1-p_i\} \leq 1 $$
$$ \sum \limits_{i=1}^k \mathbb{E}(X_i^2 | \mathcal{F}_{i-1}) \leq \sum \limits_{i=1}^k p_i (1-p_i) \leq a^2< \infty $$
By Theorem \ref{mdsineq} for all $\epsilon > 0$ we have:
\begin{eqnarray}
\label{greproofineq}
\mathbb{P}(  \frac{| f(G) - \mathbb{E}(f(G)) |}{n} > \epsilon ) \leq 2 \exp (- a^2 \rho (\frac {n \epsilon}{a^2} ))
\end{eqnarray}
$ \lim \limits_{x \rightarrow 0} \frac{\rho (x)}{x^2}=\frac{1}{2}$ implies:  
$$\mathbb{P}(  \frac{| f(G) - \mathbb{E}(f(G)) |}{n} > \epsilon ) \leq 2 \exp (- \epsilon ^2 \frac {n^2}{2a^2})$$ 
now if \eqref{ger1} holds then $\frac{ f(G) - \mathbb{E}(f(G))}{n} \rightarrow _P 0$ as $n \rightarrow \infty$. To show the a.s. convergence, note that \eqref{ger2} gives$\sum \limits_{n=1}^{\infty} \mathbb{P}(  \frac{| f(G) - \mathbb{E}(f(G)) |}{n} > \epsilon ) < \infty$ so by the Borel-Cantelli lemma $\frac{f(G) - \mathbb{E}(f(G))}{n} \rightarrow 0 \as $ as $n \rightarrow \infty$.

Finally using \eqref{ger3} we can let $a^2=nC$ in \eqref{greproofineq} to obtain:
$$ \mathbb{P}(  \frac{| f(G) - \mathbb{E}(f(G)) |}{n} > \epsilon ) \leq 2 \exp (- nC \rho (\frac { \epsilon}{C} ) )$$
\end{proof}

Theorem \ref{convGER} provides convergence and concentration inequality for $ER$ random networks. A corollary of Theorem \ref{convGER} is that the obtained results are valid for undirected random networks of Chung-Lu type (see page 82 of Durrett \cite{DurretBook}). In a Chung-Lu random network the edge between $i$ and $j$ exists with probability $\frac{w_i w_j}{\sum \limits_{k=1}^n w_k}$ for the set of weights $w_1, \ldots, w_n$. Conditions \eqref{ger1}, \eqref{ger2}, \eqref{ger3} for a Chung-Lu random network will be $\limsup \limits_{n \to \infty} \frac{\overline{w}_n}{n} =0$, $\limsup \limits_{n \rightarrow \infty} \frac{\overline{w}_n \log n}{n} =0$ and $\sup \limits_{n \geq 1} \overline{w}_n < \infty $ respectively, where $\overline{w}_n$ is the average weight $\overline {w}_n=\frac{1}{n} \sum \limits_{k=1}^n w_k$.
\begin{thm}\label{convufs} 
Assume $f$ has the Lipschitz properrty and $G=(V,E), |E|=k_n$ is $UFS$. If 
\begin{eqnarray}
\label{ufs1}
\limsup \limits_{n \rightarrow \infty} \frac{k_n}{n^2} = 0
\end{eqnarray}
then $\frac{ f(G) - \mathbb{E}(f(G))}{n} \rightarrow _P 0$ as $n \rightarrow \infty$. If furthermore,
\begin{eqnarray}
\label{ufs2}
\limsup \limits_{n \rightarrow \infty} \frac{k_n \log n}{n^2}=0
\end{eqnarray}
then $\frac{f(G) - \mathbb{E}(f(G))}{n} \rightarrow 0 \as $ as $n \rightarrow \infty$. The rate of convergence is exponential when the average degree is finite (e.g. $UFS(\lambda)$ for $\lambda < \infty$). Namely $\sup \limits_{n \geq 1} \frac{k_n}{n} \leq C < \infty$ implies:
\begin{eqnarray}
\label{ufs3}
\mathbb{P}(  \frac{| f(G) - \mathbb{E}(f(G)) |}{n} > \epsilon ) \leq 2 \exp (-n \frac {\epsilon ^2}{C})
\end{eqnarray}
\end{thm}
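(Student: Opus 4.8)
The obstacle that distinguishes $UFS$ from the $ER$ case treated in Theorem~\ref{convGER} is that the edges are no longer present independently: fixing the total number of edges at $k_n$ induces (negative) dependence among the edge indicators, so the independent-Bernoulli martingale used there is unavailable. The plan is therefore to build an \emph{edge-exposure} Doob martingale that reveals the $k_n$ edges one at a time and to feed it into the same martingale concentration inequality, Theorem~\ref{mdsineq}. Concretely, I would realize the $UFS$ network by drawing the edges $e_1,\dots,e_{k_n}$ sequentially and uniformly \emph{without replacement} from the $N$ admissible slots ($N=n^2$ in the directed case, $N=\tfrac{n(n+1)}{2}$ in the undirected case); the resulting edge set is uniform of size $k_n$, as required. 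Setting $\mathcal F_i=\sigma(e_1,\dots,e_i)$, $\mathcal F_0$ trivial, and $X_i=\mathbb E(f(G)\mid\mathcal F_i)-\mathbb E(f(G)\mid\mathcal F_{i-1})$ gives a martingale difference sequence with $\sum_{i=1}^{k_n}X_i=f(G)-\mathbb E(f(G))$, since $f(G)$ is determined by all $k_n$ revealed edges.

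The heart of the argument, and the step I expect to be the main obstacle, is bounding $\|X_i\|_\infty$ and the conditional variances $\mathbb E(X_i^2\mid\mathcal F_{i-1})$ \emph{despite} the dependence. Writing $g(a)=\mathbb E(f(G)\mid e_1,\dots,e_{i-1},e_i=a)$, one has $X_i=g(e_i)-\frac{1}{N-i+1}\sum_a g(a)$, so it suffices to show that $|g(a)-g(a')|$ is uniformly bounded. I would prove this by a coupling/exchange argument: conditioned on $\mathcal F_{i-1}$, the two laws of the remaining edge set under $e_i=a$ and under $e_i=a'$ can be coupled so that the two completed networks differ only in the status of the slots $a$ and $a'$, i.e. in at most two edges. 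The Lipschitz property (Definition~\ref{lipschitz}), applied twice, then yields $|g(a)-g(a')|\le c$ for a universal constant $c$ (with $c$ doubled in the undirected case, exactly as the constant $M$ was doubled in the proof of Theorem~\ref{convGER}). Hence $\|X_i\|_\infty\le c$ and $\sum_{i=1}^{k_n}\mathbb E(X_i^2\mid\mathcal F_{i-1})\le a^2$ with $a^2=O(k_n)$.

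With these two bounds in hand, Theorem~\ref{mdsineq} gives, for $t=n\epsilon$,
\begin{equation*}
\mathbb P\!\left(\frac{|f(G)-\mathbb E(f(G))|}{n}>\epsilon\right)\le 2\exp\!\left(-\frac{a^2}{c^2}\,\rho\!\left(\frac{c\,n\epsilon}{a^2}\right)\right),
\end{equation*}
and, using $\rho(x)\to x^2/2$ as $x\to0$ exactly as in the $ER$ proof, the sub-Gaussian form $2\exp(-n^2\epsilon^2/(2a^2))$ with $a^2=O(k_n)$. The three regimes then follow by plugging in: condition \eqref{ufs1} makes $n^2/k_n\to\infty$, so the probability vanishes for each fixed $\epsilon$, giving convergence in probability; condition \eqref{ufs2} forces the exponent to eventually exceed $2\log n$, so the probabilities are summable and Borel--Cantelli yields almost sure convergence; and $\sup_n k_n/n\le C$ makes $a^2=O(n)$, collapsing the bound to the exponential form \eqref{ufs3}. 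An alternative route worth noting is to exploit that $UFS(k_n)$ is precisely $ER(p)$ conditioned on $\{|E|=k_n\}$ and to transfer Theorem~\ref{convGER} through this conditioning; I would avoid it here because dividing by $\mathbb P(|E|=k_n)=\Theta(k_n^{-1/2})$ and reconciling the conditional and unconditional means is more delicate than the direct martingale, even though the lost polynomial factor is harmless against the exponential.
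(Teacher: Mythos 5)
Your proof is sound, but it takes a more self-contained route than the paper, which disposes of this theorem in three lines by citing Rhee's Theorem 4 directly: that result already states $\mathbb{P}\left( | f(G) - \mathbb{E}(f(G)) | > t \right) \leq 2 \exp \left( - t^2 / k_n \right)$ for Lipschitz functions of a uniform $k_n$-subset of edges, and the paper simply sets $t = n\epsilon$ and reads off the three regimes (with Borel--Cantelli for the almost sure statement). What you have done is essentially reconstruct the proof of Rhee's inequality: the edge-exposure Doob martingale over sequential sampling without replacement, the coupling/exchange argument showing that conditioning on $e_i = a$ versus $e_i = a'$ changes the completed network in at most two slots (so the Lipschitz property applied twice bounds $|g(a) - g(a')|$), and then Theorem~\ref{mdsineq}. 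That coupling step is correct and is exactly the standard mechanism hiding inside the cited result, so your version buys transparency and makes the theorem independent of the external reference, at the cost of length. Two small points to tidy up. First, your final constants do not quite match the stated bound \eqref{ufs3}: with $\|X_i\|_\infty \leq c$ and $a^2 = c^2 k_n$ you land on $2\exp\left( - n\epsilon^2 / (2 c^2 C) \right)$ rather than $2\exp\left( - n \epsilon^2 / C \right)$; Rhee's Theorem 4 delivers the sharper constant directly, so if you want the bound exactly as displayed you either need to cite it or track the variance more carefully. Second, the replacement of $\rho(x)$ by $x^2/2$ is an asymptotic statement valid only as $x \to 0$; it is fine for the two convergence regimes where the argument of $\rho$ tends to zero, but in the finite-average-degree regime the argument stays bounded away from zero, so you should keep $\rho$ (as the paper does in Theorem~\ref{convGER}) or use a uniform lower bound on $\rho$ rather than the limit.
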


\begin{proof} Rhee \cite[Theorem 4]{wansoo} shows 
$$\mathbb{P}(  | f(G) - \mathbb{E}(f(G)) | > t ) \leq 2 \exp (-\frac {t^2}{k_n}) $$
If \eqref{ufs1} holds then letting $t=n \epsilon$ we have $\exp (-\frac {t^2}{k_n}) \to 0$. If \eqref{ufs2} holds then $\sum \limits_{n=1}^{\infty} \mathbb{P}(  \frac{| f(G) - \mathbb{E}(f(G)) |}{n} > \epsilon ) < \infty$ so by the Borel-Cantelli Lemma $\frac{f(G) - \mathbb{E}(f(G))}{n} \rightarrow 0 \as $ as $n \rightarrow \infty$. To show \eqref{ufs3} it suffices to let $t=n \epsilon$.
\end{proof}
\begin{thm}\label{convDD}
Let $G=(V,E)$ be $DD(p_{in},p_{out})$ or $DD(p)$ (in the recent case $p_{in}=p_{out}=p$). Assuming real-valued function $f$ has the Lipschitz property if 
\begin{eqnarray}
\label{DD1}
\limsup \limits_{n \rightarrow \infty} \frac{| E |}{n^2}  = 0
\end{eqnarray}
then $\frac{ f(G) - \mathbb{E}(f(G))}{n} \rightarrow _P 0$ as $n \to \infty$. If in addition
\begin{eqnarray}
\label{DD2}
\limsup \limits_{n \rightarrow \infty} \frac{| E | \log n}{n^2}  = 0
\end{eqnarray}
then $\frac{f(G) - \mathbb{E}(f(G))}{n} \rightarrow 0 \as. $ as $n \to \infty$. When the average degree is finite i.e.  
\begin{eqnarray}
\label{DD3}
\sup \limits_{n \geq 1} \frac{| E | }{n}  \leq C < \infty
\end{eqnarray}
the rate of convergence is exponential:
$$ \mathbb{P}(  \frac{| f(G) - \mathbb{E}(f(G)) |}{n} > \epsilon ) \leq 2 \exp (- n C \rho (\frac { \epsilon}{4C} ) )$$
\end{thm}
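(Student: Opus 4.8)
The plan is to mirror the argument used for the $ER$ case in Theorem \ref{convGER}, replacing the independent edge indicators by the single source of randomness that drives the $DD$ construction, namely the uniform pairing of half-edges. First I would condition on the degree sequence $D_1,\dots,D_n$, so that the multiset of half-edges is fixed and $|E|=\tfrac12\sum_i D_i$ is a constant; the only remaining randomness is then the uniform random matching of the half-edges. Enumerating the $m=|E|$ pairs in the order they are created by the sequential pairing process, I let $\mathcal F_0$ be the trivial $\sigma$-field, let $\mathcal F_i$ reveal the first $i$ pairs, and form the Doob martingale difference sequence $X_i=\mathbb E(f(G)\mid\mathcal F_i)-\mathbb E(f(G)\mid\mathcal F_{i-1})$, so that $\sum_{i=1}^m X_i=f(G)-\mathbb E(f(G)\mid D_1,\dots,D_n)$. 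Theorem \ref{mdsineq} is then the right tool, exactly as in the $ER$ proof, once the two ingredients $\max_i\|X_i\|_\infty\le M$ and $\sum_i\mathbb E(X_i^2\mid\mathcal F_{i-1})\le a^2$ are in hand.

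The key step is a switch (swap) coupling that produces the Lipschitz constant $M=4$, which is exactly the source of the factor $4$ in the stated bound. Conditioned on $\mathcal F_{i-1}$, the two conditional laws obtained by fixing the partner of the current half-edge to two different half-edges are related by a measure-preserving switch that exchanges endpoints between two pairs; such a switch deletes two edges and inserts two edges, so by four applications of the Lipschitz property (Definition \ref{lipschitz}) the value of $f$ changes by at most $4$. Hence $\|X_i\|_\infty\le 4$ and $\mathbb E(X_i^2\mid\mathcal F_{i-1})\le 16$, giving $\sum_{i=1}^m\mathbb E(X_i^2\mid\mathcal F_{i-1})\le 16m=16|E|$. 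Applying Theorem \ref{mdsineq} with $M=4$, $a^2=16|E|$ and $t=n\epsilon$ then yields
$$\mathbb P\!\left(\frac{|f(G)-\mathbb E(f(G)\mid D)|}{n}>\epsilon\right)\le 2\exp\!\left(-|E|\,\rho\!\Big(\frac{n\epsilon}{4|E|}\Big)\right).$$
Under \eqref{DD3}, i.e.\ $|E|\le nC$, the elementary monotonicity $\frac{d}{ds}\big(s\,\rho(A/s)\big)=\log(1+A/s)-A/s\le 0$ shows that the right-hand exponent is at least $nC\,\rho(\epsilon/(4C))$, which gives the claimed exponential bound. For the weaker hypotheses I would use $\rho(x)\sim x^2/2$ as $x\to0$ to obtain a bound of the form $2\exp(-n^2\epsilon^2/(32|E|))$; then \eqref{DD1} drives the exponent to $-\infty$ and delivers convergence in probability, while \eqref{DD2} makes $\sum_n 2\exp(-n^2\epsilon^2/(32|E|))$ summable, so Borel--Cantelli gives the almost-sure statement, precisely as in Theorem \ref{convGER}.

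The directed case $DD(p_{in},p_{out})$ is identical after replacing the single matching by the uniform pairing of out-half-edges with in-half-edges; a switch among these again changes at most two directed edges, so $M=4$ is unchanged. The remaining point, which I expect to be the main obstacle, is the passage from the conditional mean $\mathbb E(f(G)\mid D_1,\dots,D_n)$ to the unconditional $\mathbb E(f(G))$: changing a single degree $D_i$ is not a bounded-difference operation (it alters the number of half-edges arbitrarily), so the degree randomness cannot be folded into the same martingale. I would handle this outer layer separately, using the iid structure and the strong law of large numbers (as in the proof of Lemma \ref{RNDegDist}) to show that $|E|/n$ and $\mathbb E(f(G)\mid D)/n$ themselves concentrate around deterministic limits; combining this with the pairing concentration above then yields the stated convergence of $f(G)/n$ toward $\mathbb E(f(G))/n$. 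Verifying the switch coupling rigorously—that conditioning on the first $i-1$ pairs leaves the completion uniform and couplable by a single two-pair switch—and controlling this degree-averaging layer are the two places that require genuine care.
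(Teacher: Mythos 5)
Your proposal is correct and follows essentially the same route as the paper: a Doob martingale that sequentially reveals the uniform pairing of half-edges, the bounded-difference constant $M=4$ (which the paper simply asserts as $|U_{ij}-U_{il}|\leq M$ and you justify explicitly via the two-pair switching coupling), and then Theorem \ref{mdsineq} combined with the quadratic behavior of $\rho$ near zero and Borel--Cantelli, exactly as in the $ER$ case. The only substantive point on which you go beyond the paper is the passage from the pairing-conditional mean to the unconditional $\mathbb{E}(f(G))$, which the paper glosses over; your proposed LLN treatment of that outer layer is consistent with the paper's standing convention that the empirical degree distributions are asymptotically deterministic.
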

\begin{proof}
To form an inequality like \eqref{greproofineq} let $k$ be the number of half-edges before being paired (so $k$ is the number of directed edges - every undirected edge is two directed edges). Enumerate half-edges and for $i=1, 2 , \ldots, k$ let $Z_i$  be the random variable indicating the vertex whose half-edge the half-edge $i$ is paired to i.e. $Z_i=j , j \in \{1, \ldots, n \}$ if half-edge $i$ is paired to a half-edge of vertex $j$. Let $\mathcal{F}_0$ be trivial sigma-field and for $i=1, \ldots, k$ let $\mathcal{F}_i = \sigma (Z_1, \ldots, Z_i)$. Now define a martingale difference sequence as $X_i= \mathbb{E}(f(G) | \mathcal{F}_i) - \mathbb{E}(f(G) | \mathcal{F}_{i-1}) , i=1, \ldots, k$ and $U_{ij}, P_{ij}$ as :
$$ U_{ij}=\mathbb{E}(f(G) | Z_1, \ldots, Z_{i-1}, Z_i=j) $$
$$ P_{ij}=\mathbb{P}(Z_i=j | Z_1, \ldots, Z_{i-1}) $$
Hence $U_i = \mathbb{E}(f(G) | Z_1, \ldots, Z_{i-1})= \sum \limits_{j=1}^n P_{ij} U_{ij}$ and $X_i= U_{ij}-U_i$ whenever $Z_i=j$. On the other hand,
$$ \mathbb{E}(X_i^2 | \mathcal{F}_{i-1}) = \sum \limits_{j=1}^n P_{ij} (U_{ij}-U_i)^2 $$
Lipschitz property of $f$ implies $| U_{ij} - U_{il} | \leq M $ ($M \leq 4$ for undirected case and $M \leq 2$ for directed case). Hence for all $i=1, \ldots, k$:
\begin{eqnarray*} \| X_i \|_{\infty} \leq \max_{1 \leq j \leq n} \| U_{ij}-U_i \|_{\infty} 
\leq \max_{1 \leq j \leq n} \sum \limits_{l=1}^n P_{il} \| U_{ij}-U_{il} \|_{\infty} \leq M < \infty \\
\sum \limits_{i=1}^k \mathbb{E}(X_i^2 | \mathcal{F}_{i-1}) \leq \sum \limits_{i=1}^k \sum \limits_{j=1}^n P_{ij} M^2 
\leq \sum \limits_{i=1}^k M^2 \leq kM^2 \leq a^2< \infty
\end{eqnarray*}
By Theorem \ref{mdsineq} we have:
$$ \forall \epsilon \geq 0 \: \: \: \mathbb{P}(  \frac{| f(G) - \mathbb{E}(f(G)) |}{n} > \epsilon ) \leq 2 \exp (- \frac {a^2}{M^2} \rho (\frac {Mn \epsilon}{a^2} )) $$
Now \eqref{DD1}, \eqref{DD2}, \eqref{DD3} give $\limsup \limits_{n \to \infty} \frac{a^2}{n^2}  = 0$, $\limsup \limits_{n \to \infty} \frac{a^2 \log n}{n^2}  = 0$ and $\sup \limits_{n \geq 1} \frac{a^2 }{n}  \leq 16 C < \infty$ respectively.
\end{proof}

\section{Proof of Theorem \ref{greedy}}

Letting $M^{(n)}=n-|M_G(G)|$ be the number of unmatched vertices $M^{(n)}=\sum \limits_{i=1}^n M_i$ where 
$$M_i= \left\{
\begin{array}{ll}
1  \;\;\;\; \text{    the vertex picked in $i$-th iteration is unmatched} \\
0  \;\;\;\; \text{    the vertex picked in $i$-th iteration is matched}
\end{array}
\right.$$
Now note that according to the algorithm $$\mathbb{P}(M_i=1|M_{1}, \ldots, M_{i-1})=q^{n-i+1+ \sum \limits_{j=1}^{i-1} M_j}$$
For example $\mathbb{P}(M^{(n)}=0)=\mathbb{P}(M_1=0, \ldots, M_n=0)=\prod \limits_{i=1}^n (1-q^{n-i+1})= \alpha_{n}(q)$. We have 
$$\mathbb{P}(M^{(n)}=k)= \sum \limits_{|I|=k} \mathbb{P}(M_i=1_{\{i \in I \}}, \forall \: i=1, \ldots, n).$$
If $I=\{ i_1, \ldots, i_k\}$ then
\begin{eqnarray*}
\mathbb{P} \left ( M_i=1_{\{i \in I \}}, \forall  i=1, \ldots, n \right ) 
&=& (1-q^n)\ldots(1-q^{n-i_1+1})q^{n-i_1+1} 
(1-q^{n-i_1})\ldots(1-q^{n-i_k+1})q^{n-i_k+1}\ldots(1-q^{k+1}) \\
&=& \frac{\alpha_n(q)}{\alpha_k(q)} \prod \limits_{j=1}^k q^{n-i_j+1}
\end{eqnarray*}
Thus:
$$\mathbb{P}(M^{(n)}=k)=
\frac{\alpha_n(q)}{\alpha_k(q)} q^{\frac{1}{2}k(k-1)}\sum \limits_{1 \leq i_1 < \ldots<i_k \leq n } q^{\sum \limits_{j=1}^k {i_j}} $$
But:
\begin{eqnarray*}
\sum \limits_{1 \leq i_1 < \ldots<i_k \leq n } q^{\sum \limits_{j=1}^k {i_j}} 
&=& \sum \limits_{i_k=k}^n q^{i_k} \sum \limits_{i_{k-1}=k-1}^{i_k-1} q^{i_{k-1}}  \ldots \sum \limits_{i_1=1}^{i_2-1} q^{i_1}  \\
&=& \sum \limits_{i_k=k}^n q^{i_k}  \ldots \sum \limits_{i_2=2}^{i_3-1} q^{i_2} (q \frac{1-q^{i_2-1}}{1-q}) \\
&=& \sum \limits_{i_k=k}^n q^{i_k}  \ldots \sum \limits_{i_2=2}^{i_3-1} q^{i_2} \frac{q^1}{\alpha_1(q)} \frac{\alpha_{i_2-1}(q)}{\alpha_{i_2-2}(q)} \\
&=& \sum \limits_{i_k=k}^n q^{i_k}  \ldots \sum \limits_{i_3=3}^{i_4-1} q^{i_3}  \frac{q^{1+2}}{\alpha_2(q)} \Big [ (1+q)(1-q^{i_3-2}) -q(1-q^{2i_3-5}) \Big ] \\
&=& \sum \limits_{i_k=k}^n q^{i_k}  \ldots \sum \limits_{i_3=3}^{i_4-1} q^{i_3} \frac{q^{1+2}}{\alpha_2(q)} \frac{\alpha_{i_3-1}(q)}{\alpha_{i_3-3}(q)} \\
&=&  \ldots = \frac{\alpha_n(q)}{\alpha_k(q) \alpha_{n-k}(q)} q^{\frac{1}{2}k(k+1)}
\end{eqnarray*}
which establishes the result. Now to show $\lim \limits_{n \to \infty} \frac{M^{(n)}}{n}= \frac{ \log (2 - e^{- \lambda})}{\lambda}$ for $ER(\lambda)$, for large $n$ let $(1-\frac{\lambda}{n})^j= \exp(-\frac{\lambda}{n}j)$ to have:
\begin{eqnarray*}
\frac{\mathbb{P}(M^{(n)}=k+1)}{\mathbb{P}(M^{(n)}=k)}= \frac{1-q^{n-k}}{(1-q^{k+1})^2} q^{2k+1}= 
\frac{1- \exp \left (- \frac{\lambda}{n}(n-k) \right )}{ \left [ 1- \exp \left ( - \frac{\lambda}{n}(k+1) \right ) \right ]^2} \exp \left ( - \frac{\lambda}{n}(2k+1) \right).
\end{eqnarray*}
Denote $\beta_k= \exp(- \frac{\lambda}{n}k)$ to obtain:
$$\frac{\mathbb{P}(M^{(n)}=k+1)}{\mathbb{P}(M^{(n)}=k)}= \frac{\beta_k^2 (1-e^{-\lambda} \beta_k ^{-1}) }{e ^ {\frac{\lambda}{n}} - 2 \beta_k + e^{- \frac{\lambda}{n}} \beta_k^2}.$$
Therefore $\frac{\mathbb{P}(M^{(n)}=k+1)}{\mathbb{P}(M^{(n)}=k)} > 1 $ if and only if $ (e^{- \frac{\lambda}{n}}-1) \beta_k^2  + (e^{-\lambda}-2) \beta_k +   e ^ {\frac{\lambda}{n}} < 0$ i.e. the probability mass function is unimodal and as $n \rightarrow \infty$, $\mathbb{P}(M^{(n)}=k+1) > \mathbb{P}(M^{(n)}=k)$ if and only if $\beta_k^{-1} < 2-e ^ {-\lambda}$ which is equivalent to $ k < \frac{ \log (2 - e ^{-\lambda})}{\lambda} n$ so letting $k^* = \arg \max \limits_{0 \leq k \leq n} \mathbb{P}(M^{(n)}=k)$ we get $\mathbb P (|M^{(n)}-k^*| > n \epsilon) \to 0$ for all $\epsilon > 0$ as $n \to \infty$ i.e. $\frac{M^{(n)}}{n} \to _P \frac{ \log (2 - e ^{-\lambda})}{\lambda}$. By Theorem \ref{convGER}, a.s. convergence holds as well. To prove the asymptotic normality we will show
$$ \lim \limits_{n \to \infty} \log \frac{\mathbb P (M^{(n)}=k(t))}{\mathbb P (M^{(n)}=k(0))} =  - \frac{t^2}{2 \sigma ^2}$$
where $t = \lim \limits_{n \to \infty} \frac{k(t)-n\mu}{\sqrt{n}}, k(0)=k^* \approx n \mu, \mu=\frac{ \log (2 - e ^{-\lambda})}{\lambda}, \sigma ^2 = \frac{1}{4 \lambda}$. We have:
\begin{eqnarray} \label{desired}
\log \mathbb P (M^{(n)}=k(t)) = \sum \limits_{j=1}^{n-k(t)} f_j(\frac{t}{\sqrt{n}}) - \lambda n \mu ^2 - \lambda t^2 - 2 \lambda \sqrt{n} \mu t
\end{eqnarray}
where the functions $f_j: \mathbb R \to \mathbb R, j=1,2, \ldots$ are $f_j(x)=2 \log (1- e^{-\lambda (\mu+ j/n) - \lambda x}) - \log (1 - e^{- \lambda j/n})$. Note that
\begin{eqnarray} \label{mu}
e^{- \lambda \mu }= (2- e^{-\lambda} )^{-1}.
\end{eqnarray}
But for $a_j = e ^{- \lambda (\mu + j/n)} <1$ and some $ B < \infty$  
\begin{eqnarray*}
f_j(0) &=& 2 \log (1- e^{-\lambda (\mu+ j/n)}) - \log (1 - e^{- \lambda j/n}) \\
f_j '(0) &=&  \frac{2 \lambda a_j}{1-a_j} \\
f_j ''(0) &=&  \frac{-2 \lambda ^2 a_j}{(1-a_j)^2} \\
|f_j '''(x)| &<&  B .
\end{eqnarray*}
Now writing $f_j(\frac{t}{\sqrt{n}})= f_j(0) + f_j '(0) \frac{t}{\sqrt{n}} + f_j ''(0) \frac{t^2}{2n} + f_j '''(\frac{t^*}{\sqrt{n}}) \frac{t^3}{6n\sqrt{n}}$ for some $t^* \in [0,t]$ and using \eqref{mu}, because $\frac{1}{n} (n - k(t) ) \to 1- \mu$ definition of Riemann integral implies:
\begin{eqnarray*}
\frac{1}{n} \sum \limits_{j=1}^{n-k(t)} f_j '(0) &\to & 2\lambda \int \limits_{0}^{1- \mu} \frac{1}{e^{\lambda \mu + \lambda x}-1} dx = 2 \lambda \mu \\
\frac{1}{n} \sum \limits_{j=1}^{n-k(t)} f_j ''(0)
& \to & -2 \lambda ^2 \int \limits_{0}^{1- \mu} \frac{e^{-\lambda (\mu + x)}}{(1-e^{- \lambda (\mu +x)})^2} dx = -2 \lambda \\
\frac{1}{n\sqrt{n}} \sum \limits_{j=1}^{n-k(t)} f_j '''(\frac{t^*}{\sqrt{n}}) & < & \frac{B}{\sqrt{n}} \to 0 .
\end{eqnarray*}
Note that $| \frac{1}{n} \sum \limits_{j=1}^{n-k(t)} f_j '(0) - 2 \lambda \mu | \leq \frac{(1 - \mu)^2}{n} \sup \limits_{\mu \leq x \leq 1} \frac{\partial }{\partial x} \frac{1}{e^{\lambda x}-1}$ and $\sup \limits_{\mu \leq x \leq 1} \frac{\partial }{\partial x} \frac{1}{e^{\lambda x}-1}= \sup \limits_{\mu \leq x \leq 1} \frac{\lambda e^{\lambda x}}{(e^{\lambda x}-1)^2} < \infty$ imply $\frac{1}{\sqrt{n}} \sum \limits_{j=1}^{n-k(t)} f_j '(0) - 2 \lambda \mu \sqrt{n} \to 0$. Now plugging all in \eqref{desired} we get the desired result since $\log \mathbb P (X=k(0)) = \sum \limits_{j=1}^{n-k(0)} f_j(0) - \lambda n \mu ^2$.

\section{Proof of Lemma \ref{LipProofHeuristics}}
To prove the Lipschitz property for maximum matching let $G=(L,R,E), G'=(L,R,E') , E'=E \cup \{ e \}, |L|=|R|=n$ and let $M \subset E'$ be a maximum matching in $G'$. If $e \notin M$ then $M$ is a maximum matching in $G$. If $e \in M$ then $M-\{e \}$ is a matching in $G$ so the size of maximum matching is at least $|M-\{e \}|$. Thus $|M^*(G')|-1 \leq |M^*(G)| \leq |M^*(G')|$.

For the Greedy algorithm, if $n=1$ then clearly $|M_G(G')|, |M_G(G)|$ are both either 0 or 1 so $\Big | |M_G(G')| - |M_G(G)| \Big | \leq 1$. Assume function $|M_G|$ has the Lipschitz property for all networks of vertex size $n-1$ and define networks $H,H'$ as the networks $G,G'$ after the first iteration: $H=G - \{ u,v \}, H'=G' - \{ u',v' \}; u,u' \in L ; v,v' \in R$ so both $H,H'$ have $n-1$ vertices. 

If $u=u'$ and $v=v'$ then networks $H,H'$ are exactly the same except potentially the new edge $e$ which can be added to $H$ in order to get $H'$. Since $H,H'$ both have $n-1$ vertices, by the assumption $|M_G(H') - M_G(H)| \leq 1$. On the other hand, $M_G(G) = M_G(H) \cup \{ (u,v) \}, M_G(G') = M_G(H') \cup \{ (u,v) \}$ imply $|M_G(G)| = |M_G(H)|+1, |M_G(G')| = |M_G(H')|+1$ i.e. $|M_G(G') - M_G(G)| \leq 1$.

If $v \neq v'$ then the new edge $e$ is connected to $v' \in R , u' \in L$ and $deg_G (v')=0, deg_{G'}(v')=1$. Now there are two possible cases:

\begin{enumerate}
\item
vertex $u' \in L$ is not used in $M_G(G)$ i.e. there is not any vertex $w \in R$ in $M_G(G)$ such that $u'$ is matched to $w$ by Greedy algorithm applied to $G$. Therefore iterations of the Greedy algorithm after the first iteration will not change once the new edge $e$ is added. In this case $M_G(G)= M_G(H')$ and $ M_G(G')= M_G(H') \cup \{ (u',v') \}$.

\item
vertex $u' \in L$ is used in $M_G(G)$ i.e. there is a vertex $w \in R$ such that $u'$ is matched to $w$ by Greedy algorithm applied to $G$: $(u',w) \in M_G(G)$ . Therefore the only iteration of the Greedy algorithm which will change once the new edge $e$ is added is the one that vertex $w \in R$ is picked. In this case $M_G(G)= M_G(H') \cup \{ (u',w) \}$ and $ M_G(G')= M_G(H') \cup \{ (u',v') \}$.
\end{enumerate}

In both cases we have $\Big | |M_G(G')| - |M_G(G)| \Big | \leq 1$. So $|M_G|$ has the Lipschitz property. Note that there is no assumption about the vertices $v,v'$ picked in the first iteration of the Greedy algorithm. So this proof is valid assuming vertices $v,v'$ are of minimum degree, i.e. the proof works for $|M_{KS}|$ and $|M_{OKS}|$ as well.

\section{Proof of Theorem \ref{OKS}}
We prove the theorem in two steps. In Step 1, first we embed the dynamics of input and output degree sequences during the algorithm in a continuous time Markov process for a random network with bounded degrees. Then, we find explicit expressions for differential equations governing the dynamics of degree sequences for infinitely large networks. Then, we show that for a finite random network, the dynamics of degree sequences can be approximated by the solution of the presented initial value problem. Finally, we show that this solution spends zero time in the cases where there is no vertex of degree one so the number of iterations of the algorithm there is no vertex of degree one on the right side of the random network is sublinear w.r.t. the size of the network. In Step 2, we generalize the proof to the unbounded, but finite mean, degree sequences.

{\bf Step 1: Embedding in a continuous time Markov process.} Assume in addition that the asymptotic empirical degree distributions are bounded: $ p_{in}(i)= p_{out}(i) =0 $ for $i >N$ i.e. for every vertex $v \in L \cup R$ we have $\deg(v) \leq N$. First, we embed the dynamics of the algorithm in a continuous time Markov process. To go to continuous time, define $G^n(t)$ as the $n$-vertex network at time $t \in \mathbb R$ where state changes $G^n=G^n - \{ u, v \}$ occur at $\Exp(n)$ interarrival times. More precisely, let $\tau_1, \tau_2, \ldots$ be i.i.d $\Exp(n)$ random variables, i.e. the probability density function is $n e^{- nt}$ for $t \geq 0$ so $\mathbb E(\tau_i)= \frac{1}{n}$. The first state change $G^n=G^n - \{ u, v \}$ occurs at time $t=\tau_1$, the second one occurs at $t=\tau_1+\tau_2$ and so forth. Now we construct a Markov process on $\mathbb R ^{2N}$ which describes the performance of the algorithm. The transition kernel of the Markov process will be described later. Define $X^{(n)}(t) , Y^{(n)}(t) \in \mathbb R ^ N$ as:
$$X^{(n)}_k(t)= \frac{1}{n} \big | \{ v \in R : \deg(v)=k \text{ in } G^n(t) \} \big | , $$
$$Y^{(n)}_k(t)= \frac{1}{n} \big | \{ v \in L : \deg(v)=k \text{ in } G^n(t) \} \big | , $$
for $k=1, 2, \ldots , N$. In addition let $m=m(X^{(n)}(t))$ be the minimum degree of vertices in $R(t)$: $m(X^{(n)}(t))=\min \{ k : X^{(n)}_k(t) \neq 0 \}$ so letting $v_1=v, u_1=u$ whenever a state change occurs we have $\deg(v_1)=m$. Let $(u_i,v) \in E$ for $i=1, \ldots, m$, $K=\deg(u_1)$ and $(u,v_j) \in E$ for $j=1, \ldots, K$. So for a network of size $n$ we have the following conditional degree distributions for vertices $u_1 , \ldots, u_m , v_2 , \ldots , v_K$:
$$ \mathbb P_n \left (\deg(u_i)=k | \mathcal A_{i-1}(t) \right ) = \frac{nkY^{(n)}_k(t) - k \sum \limits_{j=1}^{i-1} 1_{\deg(u_j)=k}}{n\sum \limits_{k=1}^N k Y^{(n)}_k(t) - \sum \limits_{j=1}^{i-1} \deg(u_j)} , $$
$$ \mathbb P_n \left (\deg(v_i)=k | \mathcal B_{i-1}(t) \right ) = \frac{nkX^{(n)}_k(t) - k \sum \limits_{j=1}^{i-1} 1_{\deg(v_j)=k}}{n\sum \limits_{k=1}^N k X^{(n)}_k(t) - \sum \limits_{j=1}^{i-1} \deg(v_j)}  , $$
where $\mathcal A_{i}(t)= (\deg(u_1), \ldots, \deg(u_i), Y^{(n)}(t)), \mathcal B_{i}(t)= (\deg(v_1), \ldots, \deg(v_i), X^{(n)}(t))$. Note that since interarrival times are i.i.d exponential random variables, $X^{(n)}(t) , Y^{(n)}(t)$ are continuous time Markov processes.

Letting $\tilde X^{(n)},\tilde Y^{(n)} \in \mathbb R ^N$ be the corresponding vectors after one state change for $x,y \in \mathbb R ^N$ define functions $\mathbb F^n, \mathbb G^n : \mathbb R ^{2N} \to \mathbb R^N$ as:
$$\mathbb F^n(x,y)=n \mathbb E_n (\tilde X^{(n)}- X^{(n)} | X^{(n)}=x,Y^{(n)}=y) , $$
$$\mathbb G^n(x,y)=n \mathbb E_n (\tilde Y^{(n)}- Y^{(n)} | X^{(n)}=x,Y^{(n)}=y)  , $$
where $\mathbb E_n $ is expected value w.r.t $\mathbb P_n$. Since the process is Markov, probability distribution of $\tilde X^{(n)},\tilde Y^{(n)}$ depends only on $X^{(n)},Y^{(n)}$. 

{\bf Asymptotic initial value problem:} Define $\mathbb P (\deg(u_i)=k) = \frac{kY^{(n)}_k(t)}{\sum \limits_{k=1}^N k Y^{(n)}_k(t)}, \mathbb P(\deg(v_i)=k) = \frac{kX^{(n)}_k(t)}{\sum \limits_{k=1}^N k X^{(n)}_k(t)}$. Note that $\mathbb P_n, \mathbb P$ can be defined for every $x,y \in \mathbb R^N$ with non-negative components. Now, for arbitrary $x,y$, some algebra gives:
\begin{eqnarray} \label{probineq1}
| \mathbb P_n(\deg(u_i)=k| \mathcal A_{i-1}(t)) - \mathbb P(\deg(u_i)=k) | \leq  \frac{C_1}{n} ,
\end{eqnarray}
\begin{eqnarray} \label{probineq2}
| \mathbb P_n(\deg(v_i)=k| \mathcal B_{i-1}(t)) - \mathbb P(\deg(v_i)=k) | \leq   \frac{C_2}{n} .
\end{eqnarray} 
For $C_1=\frac{2N^2}{\sum \limits_{k=1}^N k y_k}, C_2=\frac{2N^2}{\sum \limits_{k=1}^N k x_k}$. Define: 
$$\mathbb F(x,y)=n \mathbb E (\tilde X^{(n)}- X^{(n)} | X^{(n)}=x,Y^{(n)}=y) , $$
$$\mathbb G(x,y)=n \mathbb E (\tilde Y^{(n)}- Y^{(n)} | X^{(n)}=x,Y^{(n)}=y) , $$
where $\mathbb E $ is expected value w.r.t $\mathbb P$. Since
\begin{eqnarray} \label{trans1}
n \tilde X^{(n)}_k=n X^{(n)}_k + \sum \limits_{j=2}^{K} [1_{\deg(v_j)=k+1}-1_{\deg(v_j)=k}]-1_{k=m} ,
\end{eqnarray}
\begin{eqnarray} \label{trans2}
n \tilde Y^{(n)}_k=nY^{(n)}_k + \sum \limits_{j=2}^{m} [1_{\deg(u_j)=k+1}-1_{\deg(u_j)=k}]-1_{k=K} ,
\end{eqnarray}
inequalities \eqref{probineq1}, \eqref{probineq2} yield
\begin{eqnarray} \label{ineq1}
\| \mathbb F^n(x,y)- \mathbb F(x,y) \|_1 \leq \frac{4N^4}{\sum \limits_{k=1}^N k x_k} \frac{1}{n} , 
\end{eqnarray}
\begin{eqnarray} \label{ineq2}
\| \mathbb G^n(x,y)- \mathbb G(x,y) \|_1 \leq \frac{4N^4}{\sum \limits_{k=1}^N k y_k} \frac{1}{n} , 
\end{eqnarray}
and,
$$ \mathbb F(x,y)=(\frac{\|A^2y\|_1}{\|Ay\|_1}-1)\frac{1}{\|Ax\|_1}(SAx-Ax)-1_{m(x)} , $$
$$ \mathbb G(x,y)=-\frac{Ay}{\|Ay\|_1}+\frac{m(x)-1}{\|Ay\|_1}(SAy-Ay) , $$
where $A$ and $S$ are moment matrix and shift matrix respectively, i.e., $A,S \in \mathbb R^{N \times N}$, $A_{ij}$ is $i$ for $i=j$ and $0$ otherwise, and $S_{ij}$ is $1$ for $i=j-1$ and $0$ otherwise, $m(x)=\min \{ k : x_k \neq 0 \}$ and $1_{m} \in \mathbb R^N$ is the vector in which $m$-th component is $1$ and all others are $0$. Note that $\|AX^{(n)}(t)\|_1= \|AY^{(n)}(t)\|_1$ because for finite $n$ always $|E(t)|= n \sum \limits_{k=1}^N k X^{(n)}_k(t)= n \sum \limits_{k=1}^N k Y^{(n)}_k(t)$. Besides, transition kernel of the Markov process can be formulated by $\mathbb P_n$ according to \eqref{trans1}, \eqref{trans2}.

{\bf Approximating the dynamics of the degree sequences by the solution of asymptotic initial value problem:} Now we can use {\em Kurtz's Theorem} \cite{Kurtz}. Given functions $\mathbb F, \mathbb G: \mathbb R^{2N} \to \mathbb R^N$ and positive constant $T$, define $x(t),y(t): [0,T] \to \mathbb R^N$ as the solution of the initial value problems 
\begin{eqnarray}  
\dot{x} = \mathbb F(x,y), x_k(0)= p_{in}(k) , k=1, \ldots, N \label{diffeq} , \\
\dot{y} = \mathbb G(x,y), y_k(0)= p_{out}(k) , k=1, \ldots, N \label{diffeq2} ,  
\end{eqnarray}
and let $\mathcal E = \{z \in \mathbb R^N \text{such that: } \epsilon < \sum \limits_{k=1}^N kz_k \leq N \}$. Suppose the following statements hold:
\begin{enumerate}
\item 
$\lim \limits_{n \to \infty} \sup \limits_{z_1 , z_2 \in \mathcal E} \| \mathbb F^n(z_1,z_2) - \mathbb F (z_1,z_2) \|_1 =0 . $
\item 
$\lim \limits_{n \to \infty} \sup \limits_{z_1 , z_2 \in \mathcal E} \| \mathbb G^n(z_1,z_2) - \mathbb G (z_1,z_2) \|_1 =0 . $
\item
for all $k=1, \ldots, N$, $\lim \limits_{n \to \infty} X^{(n)}_k(0) =p_{in}(k) . $
\item
for all $k=1, \ldots, N$, $\lim \limits_{n \to \infty} Y^{(n)}_k(0) =p_{out}(k) . $
\item
functions $\mathbb F, \mathbb G$ are Lipschitz {(in the classic sense)}. 
\end{enumerate}
Then 
\begin{eqnarray} \label{closeness1}
\lim \limits_{n \to \infty} \mathbb P_n \left ( \exists t \in [0,T]: m(X^{(n)}(t)) \neq m(x(t)) \right ) =0 .
\end{eqnarray}
Letting $T= T(\epsilon) = \sup \{ t : \sum \limits_{k=1}^N kx_k(t) > \epsilon, \sum \limits_{k=1}^N ky_k(t) > \epsilon\}$ for some arbitrary $\epsilon >0$, the first two conditions are satisfied by \eqref{ineq1}, \eqref{ineq2}. By the definition of asymptotic empirical degree distributions $\lim \limits_{n \to \infty} X^{(n)}_k(0) =p_{in}(k)$ and $\lim \limits_{n \to \infty} Y^{(n)}_k(0) =p_{out}(k)$. On the other hand, the initial value problems \eqref{diffeq}, \eqref{diffeq2} have unique solutions since defining metric $d$ on $\mathbb R ^N$ as $d(x,y)=\|x-y\|_1 + 1_{m(x) \neq m(y)}$, $\mathbb F, \mathbb G$ are Lipschitz with respect to this metric i.e. there is a $B < \infty $ such that for all $x,x',y,y' \in \mathbb R^N$
$$ d(\mathbb F(x,y),\mathbb F(x',y')) < B (d(x,x')+d(y,y')) , $$
$$ d(\mathbb G(x,y),\mathbb G(x',y')) < B (d(x,x')+d(y,y')) . $$
Note that stopping time at $T(\epsilon)$, i.e. when $\epsilon n$ edges are {remaining in the network to be removed by the algorithm, will not cause any problem since} continuing the algorithm from that point on cannot add more than $\epsilon$ edges to the matching on a scale relative to the number $n$ of vertices.

{\bf Properties of the asymptotic initial value problems:} The useful fact about the solutions of \eqref{diffeq}, \eqref{diffeq2} is that Lebesgue measure of the set $\{ 0<t<T: m(x(t)) >1 \}$ is zero. Suppose it is not. So there are $0< t_1< t_2$ such that $x_1(t)=0$ for all $t \in [t_1, t_2]$ so $\frac{dx_1(t)}{dt}=0$ for all $t \in (t_1, t_2)$. But
$$\frac{dx_1(t)}{dt} = \mathbb F_1(x,y)= (\frac{\|A^2y\|_1}{\|Ay\|_1}-1)\frac{1}{\|Ax\|_1}(2x_2 (t)-x_1(t))$$
implies $x_2(t)=0$ for all $t \in (t_1, t_2)$ which means $m(x(t)) >2$ for all $t \in (t_1, t_2)$. Repeating this argument for $x_2$ now we will get $m(x(t)) >3$ and so on. Therefore, $m(x(t)) >  N$ which is impossible. Thus, the set $\{ 0<t<T: m(x(t)) >1 \}$ has zero Lebesgue measure.

{\bf Sublinearity of the number of iterations of the algorithm with no degree one vertex:} Let $J^{(n)} \subset \{ 1,2, \ldots , n \}$ be the set of indices $i$ of iterations of $OKS$ such that after the $i$-th iteration the minimum degree on the right side of the network is larger than one. Since the set $\{ 0<t<T: m(x(t)) >1 \}$ has zero Lebesgue measure by \eqref{closeness1} Lebesgue measure of the set $\{ 0<t<T: m(X^{(n)}(t)) >1 \}$ goes to 0 as $n$ grows. Because the Lebesgue measure of the set $\{ 0<t<T: m(X^{(n)}(t)) >1 \}$ is $ \sum \limits_{i \in J^{(n)}}\tau_{i+1}$ we have $\lim \limits_{n \to \infty} \sum \limits_{i \in J^{(n)}}\tau_{i+1} =0$ but by the Law of Large Numbers $\lim \limits_{n \to \infty} \frac{1}{|J^{(n)}|}\sum \limits_{i \in J^{(n)}} n \tau_{i+1} = \mathbb E (n \tau_1) =1$. Therefore $ \lim \limits_{n \to \infty} \frac{|J^{(n)}|}{n} = \lim \limits_{n \to \infty} \frac{|J^{(n)}|}{n} \frac{1}{|J^{(n)}|}\sum \limits_{i \in J^{(n)}} n \tau_{i+1} =\lim \limits_{n \to \infty} \sum \limits_{i \in J^{(n)}}\tau_{i+1}=0 $ i.e. the number of iterations of $OKS$ algorithm for which there is no vertex of degree one on the right side of the network is sublinear w.r.t. the size of the network. 

{\bf Sublinearity of difference between the output of $OKS$ and maximum matching:} Using $\lim \limits_{n \to \infty} \frac{|J^{(n)}|}{n}=0$ we prove that the size of the matching provided by the $OKS$ algorithm is away from maximum matching size by a sublinear factor. Starting the algorithm, as long as the minimum degree on the right side of the network is one, $OKS$ makes no mistake, i.e. the size of the matching by $OKS$ is the same as the size of maximum matching. When the minimum degree is $m=m(X^{(n)}(t)) > 1$ it is possible that $OKS$ picks a vertex on the left side which is not the optimal choice. We make it optimal by manipulating the network: if $v \in R, u \in L, \deg(v)=m$ are the chosen vertices in the iteration of the algorithm to be removed from the network, $M_{OKS}= M_{OKS} \cup \{ (u,v) \}$, manipulate the network by removing all other $m-1$ edges connected to $v$. Since $|M^*|$ has the Lipschitz property, removing these $m-1 $ edges will change the size of the maximum matching by at most $m-1$. Since $m$ is the minimum degree and the average degree is bounded, $m-1$ is bounded as well. On the other hand, the number of iterations that $OKS$ will face such cases is sublinear w.r.t. the size of the network, so the whole number of possible errors, or in other words, the whole deviation from maximum matching made by $OKS$ is sublinear, i.e. $\lim \limits_{n \to \infty} \frac{| M_{OKS}(G) |}{n}= \lim \limits_{n \to \infty} \frac{| M^*(G) |}{n}$.

{\bf Step 2: Generalization to unbounded degree.} Now to generalize the proof to cases where the asymptotic empirical degree distributions are not bounded, we use the classical technique of truncation. For arbitrary $\epsilon > 0$, let $N$ be large enough such that $\sum \limits_{k=N+1}^ \infty k p_{in}(k) < \frac{\epsilon}{2},  \sum \limits_{i=N+1}^ \infty k p_{out}(k) < \frac{\epsilon}{2} $. In random network $G$ remove some edges in order to have no vertex of degree larger than $N$ to get random network $H$ which has bounded asymptotic empirical degree distributions. By Step 1,
\begin{eqnarray}\label{step21}
\lim \limits_{n \to \infty} \frac{| M_{OKS}(H) |}{n}= \lim \limits_{n \to \infty} \frac{| M^*(H) |}{n} .
\end{eqnarray}
Because by Lemma \ref{LipProofHeuristics} both functions $|M^*|,|M_{OKS}|$ have the Lipschitz property and asymptotically the number of edges removed from $G$ to get $H$ is less than $n \epsilon$: 
\begin{eqnarray} \label{step22}
\lim \limits_{n \to \infty} \frac{\big| |M^*(H)|-|M^*(G)| \big |}{n} <  \epsilon ,
\end{eqnarray}
\begin{eqnarray} \label{step23}
\lim \limits_{n \to \infty} \frac{\big | |M_{OKS}(H)|-|M_{OKS}(G)| \big |}{n} <  \epsilon .
\end{eqnarray}
Now \eqref{step21}, \eqref{step22}, \eqref{step23} imply the desired result. Further, when $\epsilon \to 0$, $N \to \infty$, so formally we can take $N=\infty$ and write the functions $\mathbb F, \mathbb G : \mathbb R^\infty \to \mathbb R^\infty$ as

\begin{eqnarray*} 
\mathbb F(x,y) &=& (\frac{\|A^2y\|_1}{\|Ay\|_1}-1)\frac{1}{\|Ax\|_1}(SAx-Ax)-1_{m(x)} ,\\
\mathbb G(x,y) &=& -\frac{Ay}{\|Ay\|_1}+\frac{m(x)-1}{\|Ay\|_1}(SAy-Ay) ,
\end{eqnarray*}
for matrices $A,S \in \mathbb R^{\infty \times \infty}$ provided $\| A^2 x(0) \|_1 = \sum \limits_{k=1}^ \infty k^2 p_{in}(k) < \infty$ or $\| A^2 y(0) \|_1 = \sum \limits_{k=1}^ \infty k^2 p_{out}(k) < \infty$. 

\section{Proof of Theorems \ref{greedymean},\ref{KSgeneral}}
As we saw in the proof of Theorem \ref{OKS}, The asymptotic performance of Greedy, $OKS$ and $KS$ algorithms can be described by asymptotic empirical degree distributions which are solutions of some initial value problems. If we find functions $\mathbb F, \mathbb G : \mathbb R^\infty \to \mathbb R^\infty$ for Greedy and $KS$ we will have
\begin{eqnarray*} 
\mathbb F(x,y) & =&  (\frac{\|A^2y\|_1}{\|Ay\|_1}-1)\frac{1}{\|Ax\|_1}(SAx-Ax)- \frac{x}{\|x\|_1} , \\
 \mathbb G(x,y) &=& -\frac{Ay}{\|Ay\|_1}++\frac{ \frac{\|Ax\|_1}{\|x\|_1}-1}{\|Ay\|_1}(SAy-Ay)
\end{eqnarray*}
for Greedy and for $KS$
\begin{eqnarray*} \label{KSfunctions1}
\mathbb F(x,y) &=& \frac{x_m}{x_m + y_m} \left [ (\frac{\|A^2y\|_1}{\|Ay\|_1}-1)\frac{1}{\|Ax\|_1}(SAx-Ax)-1_{m} \right ] 
+ \frac{y_m}{x_m + y_m} \left [-\frac{Ax}{\|Ax\|_1}+\frac{m-1}{\|Ax\|_1}(SAx-Ax) \right ]
\end{eqnarray*}
\begin{eqnarray*} \label{KSfunctions2}
\mathbb G(x,y) &=& \frac{y_m}{x_m + y_m} \left [ (\frac{\|A^2x\|_1}{\|Ax\|_1}-1)\frac{1}{\|Ay\|_1}(SAy-Ay)-1_{m} \right ] 
+ \frac{x_m}{x_m + y_m} \left [-\frac{Ay}{\|Ay\|_1}+\frac{m-1}{\|Ay\|_1}(SAy-Ay) \right ]
\end{eqnarray*}
where $m=\min \{ m(x(t)), m(y(t)) \}$ is the minimum degree. Since for $KS$ we have $\mathbb G(x,y) = \mathbb F(y,x)$, when $x(0)=y(0)$ (i.e. $p_{in}=p_{out}$) $x(t)=y(t)$ and
$$ \dot x = \mathbb F(x)= (\frac{\|A^2x\|_1}{\|Ax\|_1}+m-2)  \frac{SAx-Ax}{2\|Ax\|_1} - \frac{Ax}{2\|Ax\|_1} - \frac{1_m}{2}$$
Thus for any degree distribution $p$ the dynamics of $KS$ is the same for both $DD(p)$ and $DD(p,p)$. Further, regarding the results provided in Theorems \ref{greedymean}, \ref{KSgeneral} the relative size of the output of the algorithm as well as the dynamics of the algorithm is the same for all random networks $ER(\lambda)$, $UFS(\lambda)$, $DD(p)$ and $DD(p,p)$ where probability distribution $p$ is $\Poisson(\lambda)$ because they all are sharing the asymptotic empirical degree distributions. So Theorem \ref{greedy} implies $\lim \limits_{n \to \infty }\frac{|M_G(G)|}{n} = 1 - \frac{ \log (2 - e^{- \lambda})}{\lambda}$. For $KS$, all mentioned statements are proved for undirected $ER(\lambda)$ by Karp and Sipser \cite{KS} so are valid for the desired class of random networks.

\section{Proof of Theorem \ref{MMSize}}
Here we assume in addition that $\| A^2 x(0) \|_1 = \sum \limits_{k=1}^ \infty k^2 p_{in}(k) < \infty$ and $\| A^2 y(0) \|_1 = \sum \limits_{k=1}^ \infty k^2 p_{out}(k) < \infty$. Generalization to the case where above quantities are not bounded is straightforward similar to Step 2 in the proof of Theorem \ref{OKS} and is omitted.

Because $\lim \limits_{n \to \infty} \frac{| M^*(G)|}{n}=\lim \limits_{n \to \infty} \frac{| M_{KS}(G)|}{n}$ it suffices to show $U^*=\lim \limits_{n \to \infty} 1-\frac{| M_{KS}(G)|}{n}$. To show the latter claim, we run $KS$ algorithm and find the number of vertices left unmatched by the algorithm. Similar to what we did in the proof of Theorem \ref{OKS} the asymptotic fraction of vertices left unmatched by $KS$ is
\begin{eqnarray*}
1- \lim \limits_{n \to \infty} \frac{| M_{KS}(G)|}{n}=p_{in}(0)  + \int \limits_0 ^ T  \left [ \frac{x_m}{x_m + y_m}(\frac{\|A^2y\|_1}{\|Ay\|_1}-1) + \frac{(m-1)y_m}{x_m + y_m}\right ] \frac{x_1(t)}{\|Ax\|_1} dt
\end{eqnarray*}
where 
$$T= \sup \{ t : \sum \limits_{k=1}^\infty kx_k(t) > 0, \sum \limits_{k=1}^\infty ky_k(t) > 0\}$$ 
$$ \dot{x} =\mathbb F(x,y), x(0)=p_{in}$$
 $$ \dot{y} =\mathbb F(y,x), y(0)=p_{out}$$ 
\begin{eqnarray*} \label{KSDynamic}
\mathbb F(x,y)  = \frac{x_m}{x_m + y_m} \left [ (\frac{\|A^2y\|_1}{\|Ay\|_1}-1)\frac{1}{\|Ax\|_1}(SAx-Ax)-1_{m} \right ] 
+ \frac{y_m}{x_m + y_m} \left [-\frac{Ax}{\|Ax\|_1}+\frac{m-1}{\|Ax\|_1}(SAx-Ax) \right ]
\end{eqnarray*}
and $m=\min \{ m(x(t)), m(y(t)) \}$ is the minimum degree. Since as we saw in the proof of Theorem \ref{OKS} the set $\{ t: m(x(t)) > 1 , m(y(t))>1 \}$ is of zero Lebesgue measure without loss of generality in all integrations we can assume $m=1$, especially
\begin{eqnarray*}
1- \lim \limits_{n \to \infty} \frac{| M_{KS}(G)|}{n} =p_{in}(0)
+ \int \limits_0 ^ T  \frac{x_1}{x_1 + y_1}(\frac{\|A^2y\|_1}{\|Ay\|_1}-1)  \frac{x_1(t)}{\|Ax\|_1} dt 
\end{eqnarray*}
Now define:
$$ x_0(t)= p_{in}(0)+\int \limits_0^t \frac{x_1}{x_1 + y_1}(\frac{\|A^2y(s)\|_1}{\| Ay(s) \|_1}-1) \frac{x_1(s)}{\|Ax\|_1}ds , $$
$$ y_0(t)= p_{out}(0)+\int \limits_0^t \frac{y_1}{x_1 + y_1}(\frac{\|A^2x(s)\|_1}{\|Ax(s)\|_1}-1) \frac{y_1(s)}{\|Ay\|_1}ds , $$
$$\mu (t)= \sum \limits_{i=0}^ \infty i x_i(t)= \sum \limits_{i=0}^ \infty i y_i(t)=\|Ax\|_1=\|Ay\|_1 , $$
$$ \Phi_{in}(t,u)=\sum \limits_{i=0}^ \infty x_i(t)u^i , $$
$$ \Phi_{out}(t,u)=\sum \limits_{i=0}^ \infty y_i(t)u^i , $$
$$\phi_{in}(t,u)= \sum \limits_{i=1}^ \infty \frac{ix_i(t)}{\| Ax(t) \|_1}u^{i-1} , $$
$$\phi_{out}(t,u)=\sum \limits_{i=1}^ \infty \frac{iy_i(t)}{\| Ay(t) \|_1}u^{i-1} , $$
$$ \phi_{in}(t,w_2(t))=w_3(t) , $$
$$ \phi_{in}(t,1-w_1(t))=1-w_4(t) , $$
$$ \phi_{out}(t,w_4(t))=w_1(t) , $$
$$ \phi_{out}(t,1-w_3(t))=1-w_2(t) , $$
$$ V(t)= \sum \limits_{i=1}^\infty x_i(t) = \|x(t)\|_1 , $$
\begin{eqnarray*} 
U(t)=\frac{1}{2} & \Big  [ & \Phi_{in}(t,w_2(t))+\Phi_{in}(t,1-w_1(t)) +\Phi_{out}(t,w_4(t))  +\Phi_{out}(t,1-w_3(t)) -2 \\
& +& \mu (t) \big [w_3(t) (1-w_2(t))+w_1(t) (1-w_4(t)) \big ] \Big ]
\end{eqnarray*}
Since $x_1(T)=x_2(T)=\ldots = 0, y_1(T)=y_2(T)=\ldots = 0$ we have $V(T)=0,\mu (T)=0$. But $V(0)=1-p_{in}(0)$ and for $m=1$
\begin{eqnarray*} 
\dot V(t)=  \frac{d \|x \|_1}{dt}
&=& \frac{x_1}{x_1 + y_1} \left [-(\frac{\|A^2y\|_1}{\|Ay\|_1}-1)\frac{x_1}{\|Ax\|_1}-1 \right ] - \frac{y_1}{x_1 + y_1} \\
&=& -\frac{x_1}{x_1 + y_1}(\frac{\|A^2y\|_1}{\|Ay\|_1}-1)\frac{x_1}{\|Ax\|_1}-1 \\
& =& -\dot x_0(t) -1
\end{eqnarray*}
i.e. $-1+p_{in}(0)=V(T)-V(0)=x_0(0)- x_0(T)-T=p_{in}(0)- x_0(T)-T$. Therefore $\Phi_{in}(T,u)=x_0(T), \Phi_{out}(T,u)=y_0(T), U(T)=1-2T$ and
\begin{eqnarray} \label{KSUnmatched}
1 - \lim \limits_{n \to \infty} \frac{| M_{KS}(G)|}{n}=x_0(T)=y_0(T)=1-T
\end{eqnarray}
On the other hand, as long as $m=1$:
\begin{eqnarray*} 
\dot \mu (t) = \frac{d \mu}{dt} &=&  \frac{y_1}{x_1 + y_1} \left [-\frac{\| A^2x \|_1}{\|Ax\|_1} \right ] 
+ \frac{x_1}{x_1 + y_1} \left [ (\frac{\|A^2y\|_1}{\|Ay\|_1}-1)\frac{-x_1-\|Ax\|_1+x_1 }{\|Ax\|_1} -1 \right ] \\
&=& -\frac{x_1}{x_1 + y_1}\frac{\|A^2y\|_1}{\|Ay\|_1} - \frac{y_1}{x_1 + y_1} \frac{\|A^2x\|_1}{\|Ax\|_1} \\
&=& -\frac{x_1 \|A^2y\|_1 + y_1 \|A^2x \|_1}{(x_1 + y_1)\|Ax\|_1} ,
\end{eqnarray*}
\begin{eqnarray*} 
\frac{d}{dt}\Phi_{in}(t,w_2(t))
&=& \frac{x_1}{x_1 + y_1}\Big [ (\frac{\|A^2y\|_1}{\|Ay\|_1}-1)\phi_{in}(t,w_2(t))(1-w_2(t)) 
-w_2(t) \Big] \\
 &+& \frac{y_1}{x_1 + y_1} \big [-w_2(t) \phi_{in}(t,w_2(t)) \big ] 
+ \|Ax\|_1 \dot w_2(t) \phi_{in}(t,w_2(t))\\
&=& \frac{x_1}{x_1 + y_1} \left [ (\frac{\|A^2y\|_1}{\|Ay\|_1}-1)w_3(t)(1-w_2(t))-w_2(t) \right ] \\
&+& \frac{y_1}{x_1 + y_1} \big [-w_2(t) w_3(t) \big ] 
+ \|Ax\|_1 \dot w_2(t) \phi_{in}(t,w_2(t)) ,
\end{eqnarray*}
\begin{eqnarray*} 
\frac{d}{dt}\Phi_{in}(t,1-w_1(t)) 
&=&\frac{x_1}{x_1 + y_1} \Big [(\frac{\|A^2y\|_1}{\|Ay\|_1}-1)\phi_{in}(t,1-w_1(t))w_1(t)  
 -1+w_1(t) \Big ] \\
&+& \frac{y_1}{x_1 + y_1} \big [-1+w_1(t) \phi_{in}(t,1-w_1(t)) \big ] 
- \|Ax\|_1 \dot w_1(t) \phi_{in}(t,1-w_1(t)) \\
&=& \frac{x_1}{x_1 + y_1} \left [(\frac{\|A^2y\|_1}{\|Ay\|_1}-1)w_1(t)(1-w_4(t))-1+w_1(t) \right ] \\
&+& \frac{y_1}{x_1 + y_1} \big [-(1-w_1(t)) (1-w_4(t)) \big ] 
- \|Ax\|_1 \dot w_1(t) \phi_{in}(t,1-w_1(t))  ,
\end{eqnarray*}
\begin{eqnarray*} 
\frac{d}{dt} \left [ \mu (t) w_3(t) (1-w_2(t)) \right ] 
&=& \dot \mu (t) w_3(t) (1-w_2(t)) 
+\mu (t) \dot w_3(t) (1-w_2(t))-\mu (t) w_3(t) \dot w_2(t) \\
&=& \left (-\frac{x_1\|A^2y\|_1 + y_1 \|A^2x\|_1}{(x_1 + y_1)\|Ax\|_1} \right ) w_3(t) (1-w_2(t)) 
+\|Ax\|_1 \big [ \dot w_3(t) (1-w_2(t))- w_3(t) \dot w_2(t) \big ] .\\
\end{eqnarray*}
The above equations imply:
\begin{eqnarray*} 
&& 2\frac{dU(t)}{dt} = \\
&& \frac{x_1}{x_1 + y_1}\left [(\frac{\|A^2y\|_1}{\|Ay\|_1}-1)w_3(t)(1-w_2(t))-w_2(t)\right ] 
+ \frac{y_1}{x_1 + y_1} \left [-w_2(t) w_3(t)\right ] + \|Ax\|_1 \dot w_2(t) \phi_{in}(t,w_2(t)) \\
&+& \frac{x_1}{x_1 + y_1}\left [(\frac{\|A^2y\|_1}{\|Ay\|_1}-1)w_1(t)(1-w_4(t))-1+w_1(t)\right ] 
+ \frac{y_1}{x_1 + y_1} \left [-(1-w_1(t)) (1-w_4(t))\right ] - \|Ax\|_1 \dot w_1(t) \phi_{in}(t,1-w_1(t)) \\
&+& \frac{y_1}{x_1 + y_1}\left [(\frac{\|A^2x\|_1}{\|Ax\|_1}-1)w_1(t)(1-w_4(t))-w_4(t)\right ] 
+ \frac{x_1}{x_1 + y_1} \left [-w_4(t) w_1(t)\right ] + \|Ay\|_1 \dot w_4(t) \phi_{in}(t,w_4(t)) \\
&+& \frac{y_1}{x_1 + y_1}\left [(\frac{\|A^2x\|_1}{\|Ax\|_1}-1)w_3(t)(1-w_2(t))-1+w_3(t)\right ] 
+ \frac{x_1}{x_1 + y_1} \left [-(1-w_3(t)) (1-w_2(t))\right ] - \|Ay\|_1 \dot w_3(t) \phi_{in}(t,1-w_3(t)) \\
&+& (-\frac{x_1\|A^2y\|_1 + y_1 \|A^2x\|_1}{(x_1 + y_1)\|Ax\|_1}) w_3(t) (1-w_2(t)) 
+\|Ax\|_1 \left [\dot w_3(t) (1-w_2(t))- w_3(t) \dot w_2(t)\right ] \\
&+& (-\frac{x_1\|A^2y\|_1 + y_1 \|A^2x\|_1}{(x_1 + y_1)\|Ax\|_1}) w_1(t) (1-w_4(t)) 
+\|Ax\|_1 \left [\dot w_1(t) (1-w_4(t))- w_1(t) \dot w_4(t)\right ] \\
\end{eqnarray*}
i.e.
\begin{eqnarray*} 
2\frac{d}{dt}U(t)
&=& \frac{x_1}{x_1 + y_1}\left [-\frac{\|A^2y\|_1}{\|Ay\|_1}(w_3(t) (1-w_2(t)) \right . 
\left .+w_1(t) (1-w_4(t)))  \right . 
+ \left . (\frac{\|A^2y\|_1}{\|Ay\|_1}-1)w_3(t)(1-w_2(t))-w_2(t) \right . \\
&& + \left . (\frac{\|A^2y\|_1}{\|Ay\|_1}-1)w_1(t)(1-w_4(t))-1+w_1(t)  \right . 
\left . -w_4(t) w_1(t)-(1-w_3(t)) (1-w_2(t)) \vphantom{\frac{\|A^2y\|_1}{\|Ay\|_1}}\right ] \\
&+& \frac{y_1}{x_1 + y_1} \left [-\frac{\|A^2x\|_1}{\|Ax\|_1}(w_3(t) (1-w_2(t))  \right . 
\left . +w_1(t) (1-w_4(t)))  \right . 
\left . -w_2(t) w_3(t)-(1-w_1(t)) (1-w_4(t)) \right . \\
&& \left . +(\frac{\|A^2x\|_1}{\|Ax\|_1}-1)w_1(t)(1-w_4(t))-w_4(t)  \right . 
\left . +(\frac{\|A^2x\|_1}{\|Ax\|_1}-1)w_3(t)(1-w_2(t))-1+w_3(t)\right ] \\
&+& \|Ax\|_1 \left [\dot w_2(t) w_3(t) - \dot w_1(t) (1-w_4(t))  \right . 
\left . + \dot w_4(t) w_1(t) - \dot w_3(t) (1-w_2(t)) \right . \\
&+& \left .  \dot w_3(t) (1-w_2(t))- w_3(t) \dot w_2(t)  \right . 
\left . +\dot w_1(t) (1-w_4(t))- w_1(t) \dot w_4(t)\right ] .
\end{eqnarray*}
Simplifying 
\begin{eqnarray*}
2 \frac{d}{dt}U(t)
&=& \frac{x_1}{x_1 + y_1} \left [  -\frac{\|A^2y\|_1}{\|Ay\|_1}(w_3(t) (1-w_2(t))+w_1(t) (1-w_4(t))) \right .
\left . +\frac{\|A^2y\|_1}{\|Ay\|_1}w_3(t)(1-w_2(t)) - w_3(t)(1-w_2(t)) \right . \\
&& \left . - w_2(t)+\frac{\|A^2y\|_1}{\|Ay\|_1}w_1(t)(1-w_4(t))-w_1(t)(1-w_4(t))-1 \right . 
\left . +w_1(t)-w_4(t) w_1(t)-(1-w_3(t)) (1-w_2(t)) \vphantom{\frac{\|A^2y\|_1}{\|Ay\|_1}} \right ]  \\
&+& \frac{y_1}{x_1 + y_1} \left [-\frac{\|A^2x\|_1}{\|Ax\|_1}(w_3(t) (1-w_2(t))  \right . 
\left . +w_1(t) (1-w_4(t))) -w_2(t) w_3(t)-(1-w_1(t)) (1-w_4(t))  \right . \\
&& \left . +\frac{\|A^2x\|_1}{\|Ax\|_1}w_1(t)(1-w_4(t))-w_1(t)(1-w_4(t))-w_4(t)  \right . 
\left . +\frac{\|A^2x\|_1}{\|Ax\|_1}w_3(t)(1-w_2(t))-w_3(t)(1-w_2(t))-1+w_3(t)\right ] 
\end{eqnarray*}
yields
\begin{eqnarray*} 
2 \frac{d}{dt}U(t)= \frac{x_1}{x_1 + y_1}\left [-2\right ] + \frac{y_1}{x_1 + y_1} \left [ -2\right ] =-2
\end{eqnarray*}
Again, since the set $\{ t: m > 0 \}$ is of zero Lebesgue measure integrating both sides of $\frac{d}{dt}U(t)=-1$ we have:
\begin{eqnarray*} 
1-2T=U(T)=U(0)-T.
\end{eqnarray*}
So $U^*=U(0)=1-T$ which is the desired result by \eqref{KSUnmatched}.

\bibliographystyle{IEEEtran}
\bibliography{References}

\end{document}